\let\oldvec\vec
\let\vec\oldvec
\newcommand\LQ[1]{#1\backslash}
\newcommand{\True}{\ensuremath{\mathbf{tt}}}
\newcommand{\False}{\ensuremath{\mathbf{ff}}}
\newcommand{\DOM}{{\textit{dom}}}
\newcommand\IPD{\ensuremath\Delta}
\newcommand\Power{\wp}
\newcommand\nat{\mathbb{N}}
\newcommand\bool{\mathbb{B}}
\newcommand\pderiv[3][{}]{\partial^{#1}_{#3}(#2)}
\newcommand\cderiv[3][{}]{\tilde\partial^{#1}_{#3}(#2)}
\newcommand\Angle[1]{\langle#1\rangle}
\newcommand\LFP{\textsf{lfp}}
\newcommand\FV{{\textit{fv}}}
\newcommand\Rnull{\mathbf0}
\newcommand\Rempty{\mathbf1}
\newcommand\Lang[1][{}]{\mathcal{L}^{#1}}
\renewcommand\L\Lang
\newcommand\X{\mathcal{X}}
\newcommand\calA{\mathcal{A}}
\newcommand\Reg{\mathbf{R}}
\newcommand\PUSH{:}
\newcommand\EMPTY{[\,]}
\newcommand\SINGLETON[1]{{[#1]}}
\newcommand\Null{\mathcal{N}}
\newcommand\UA{\ensuremath{\mathcal{U}\mathcal{A}}}
\newcommand\ADDRESS{A}
\newcommand\RS{\ensuremath{\overline{\mathbf{r}}}}
\renewcommand\SS{\ensuremath{\overline{\mathbf{s}}}}
\newcommand\ApplySubst[2]{#1 \bullet #2} 
\newcommand\Ruleform[1]{{\setlength{\fboxrule}{1pt}\fbox{\normalsize $#1$}}}
\newcommand\partialto\hookrightarrow
\newcommand\OccursBefore\preceq
\newcommand\OccursStrictlyBefore\prec
\newcommand\StepsTo\vdash
\title{Partial Derivatives for Context-Free Languages}
\author{Peter Thiemann}
\institute{University of Freiburg}
\begin{document}

\maketitle
\setcounter{footnote}{0}
\begin{abstract}
  We extend Antimirov's partial derivatives from regular expressions
  to $\mu$-regular expressions that describe context-free languages.
  We prove the correctness of partial derivatives as well as the
  finiteness of the set of iterated partial derivatives. The latter
  are used as pushdown symbols in our construction of a
  nondeterministic pushdown automaton, which generalizes Antimirov's NFA
  construction.
\end{abstract}
\keywords{automata and logic, regular languages, context-free
    languages, pushdown automata, derivatives}
\section{Introduction}
\label{sec:introduction}

Brzozowski derivatives \cite{321249} and Antimirov's partial derivatives \cite{Antimirov96Partial} are
well-known tools to transform  regular expressions to finite automata and to define algorithms for
equivalence and containment of regular languages \cite{AntimirovIneq,Grabmayer:2005:UPC:2156157.2156171}.
Both automata constructions rely on the finiteness of the
set of iterated derivatives. Brzozowski derivatives need to be considered up to similarity
(commutativity, associativity, and idempotence for union) to obtain finiteness.  
Derivatives had quite some impact on the study of
algorithms for regular languages on finite words and trees \cite{DBLP:conf/rta/RouV03,CaronChamparnaudMignot2011}.

There are many studies of derivative structures for enhancements of regular expressions.
While Brzozowski's original work covered extended regular
expressions, partial derivatives were originally limited to simple expressions without intersection and
complement. It is a significant effort to define partial derivatives for extended regular expressions
\cite{CaronChamparnaudMignot2011}. Many further operators have been
considered, among them shuffle operators
\cite{DBLP:conf/lata/SulzmannT15},  multi-tilde-bar
expressions  \cite{DBLP:conf/wia/CaronCM12}, expressions with
multiplicities \cite{DBLP:journals/tcs/LombardyS05}, approximate
regular expressions  \cite{DBLP:conf/lata/ChamparnaudJM12}, and many
more. There have been a number of approaches to develop general
frameworks for derivation: Caron and coworkers
\cite{DBLP:journals/ita/CaronCM14} abstract over the support for
creating derivations, Thiemann \cite{DBLP:conf/wia/Thiemann16}
develops criteria for derivable language operators.

Recently, there has been practical interest in the study of
derivatives and partial derivatives. Owens and coworkers \cite{re-derivs}
report a functional implementation with some extensions (e.g., character classes) to handle large
character sets, which is partially rediscovering work on the FIRE
library \cite{DBLP:conf/wia/Watson96}. Might and coworkers
\cite{DBLP:conf/icfp/MightDS11,DBLP:conf/pldi/0001HM16} push beyond
regular languages by implementing parsing for
context-free languages using derivatives and demonstrate its
feasibility in practice. 

Winter and coworkers \cite{DBLP:conf/calco/WinterBR11} study
context-free languages in a coalgebraic setting. 
They use a notion of derivative to give
three equivalent characterizations of context-free languages by
grammars in weak Greibach normal form, behavioral differential
equations, and guarded $\mu$-regular expressions. 

In this work, we focus on using derivatives for parsing of
context-free languages. While Might and coworkers explore algorithmic
issues, we investigate the correctness of context-free parsing
with derivatives. To this end, we develop the theory of derivatives for $\mu$-regular expressions,
which extend regular expressions with a least fixed point operator. Our results are relevant for
context-free parsing because $\mu$-regular expressions are equivalent
to context-free grammars in generating power. Compared to the
work of Winter and coworkers \cite{DBLP:conf/calco/WinterBR11}, we do not require recursion to
be guarded (i.e., we admit left recursion) and we focus on establishing the connection to pushdown
automata. Unguarded recursion forces us to consider derivation by 
$\varepsilon$, which corresponds to an unfolding of a left-recursive
$\mu$-expression. Guarded expressions always admit a proper derivation
by a symbol.

Our theory is the proper generalization of Antimirov's theory of
partial derivatives to $\mu$-regular expressions: our derivative function 
corresponds \emph{exactly} to the transition function of the nondeterministic
pushdown automaton that recognizes the same language.
The pendant of Antimirov's finiteness result yields the finiteness of
the set of pushdown symbols of this automaton.

\section{Preliminaries}
\label{sec:preliminaries}

We write $\nat$ for the set of natural numbers, $\bool = \{ \False, \True \}$ for the set of
booleans, and $X \uplus Y$ for the disjoint union of sets $X$ and
$Y$. We consider total maps $m:X \to Y$ as sets of pairs in the usual
way, so that $m \subseteq X \times Y$ and $\emptyset$ 
denotes the empty mapping. For $x_0\in X$ and $y_0\in Y$, the \emph{map update} of $m$ is defined as
$m[y_0/x_0] (x) = y_0$ if $x=x_0$ and $m[y_0/x_0] (x) = m (x)$ if $x \ne x_0$.

For conciseness, we fix a finite set of symbols, $\Sigma$,  as the
underlying \emph{alphabet}. We write $\Sigma^*$ for the set of finite words
over $\Sigma$, $\varepsilon\in\Sigma^*$ stands for the empty word, and $\Sigma^+ = \Sigma^*
\setminus \{\varepsilon\}$. For 
$u,v\in\Sigma^*$ , we write $|u|\in\nat$ for the length of $u$ and $u\cdot v $ (or just $uv$) for the
concatenation of words.

Given languages $U,V,W \subseteq \Sigma^*$, 
concatenation extends to languages as usual: $U
\cdot V = \{ u \cdot v \mid u\in U, v \in V\}$.
The Kleene closure is defined as the smallest set $U^* \subseteq \Sigma^*$ such that $U^* = \{\varepsilon\} \cup U
\cdot U^*$.
We write the \emph{left quotient} as $\LQ{U}W
= \{ v \mid v\in \Sigma^*, \exists u\in U: uv \in W \}$. For a singleton language $U = \{u\}$, we write $\LQ{u}W$
for the left quotient.

\begin{definition}
  A (nondeterministic) finite automaton (NFA) is a tuple $\mathcal{A} = (Q, \Sigma, \delta, q_0, F)$ where $Q$ is a finite
  set of states, $\Sigma$ an alphabet, $\delta \subseteq Q \times
  \Sigma \times Q$ the transition relation, $q_0 \in Q$ the initial state, and $F \subseteq Q$ the set of final states.

  Let $n\in\nat$.
  A \emph{run} of $\mathcal{A}$ on $w = a_0\dots a_{n-1}\in \Sigma^*$ is a sequence $q_0\dots q_n \in
  Q^*$ such that, for all $0\le i <n$,
  $(q_i, a_i, q_{i+1}) \in \delta$. The run is \emph{accepting} if $q_n \in F$.
  The language recognized by $\mathcal{A}$ is
  $\Lang (\mathcal{A}) = \{ w \in \Sigma^* \mid \exists
  \textrm{ accepting run of $\mathcal{A}$ on $w$} \}$.
  %
\end{definition}

\begin{definition}
  A (nondeterministic) pushdown automaton (PDA) is a tuple
  $\mathcal{P} = (Q, \Sigma, \Gamma, \delta, q_0, Z_0)$ where $Q$ is a
  finite set of states, $\Sigma$ the input alphabet, $\Gamma$ the
  pushdown alphabet (a finite set), $\delta \subseteq Q \times (\Sigma
  \cup \{\varepsilon\}) \times \Gamma \times Q \times \Gamma^* $ is
  the transition relation, $q_0
  \in Q$ is the initial state, $Z_0 \in \Gamma$ is the 
  bottom symbol.

  A \emph{configuration of $\mathcal{P} $} is a tuple $c \in Q \times
  \Sigma^* \times \Gamma^*$ of the current state, the rest of the
  input, and the current contents of the pushdown.

  The transition relation $\delta$ gives rise to a binary stepping relation
  $\vdash$ on configurations defined by (for all $q, q' \in Q$,
  $\alpha\in\Sigma\cup\{\varepsilon\}$, $Z\in\Gamma$, $\gamma,\gamma'
  \in \Gamma^*$, $v\in\Sigma^*$):
  \begin{mathpar}
    \inferrule{
      (q, \alpha, Z, q', \gamma') \in \delta
    }{
      (q, \alpha v, Z\gamma) \vdash (q', v, \gamma'\gamma)
    }
  \end{mathpar}
  The language of the PDA is $\Lang (\mathcal{P}) = \{ v \in \Sigma^* \mid
  \exists q \in Q: (q_0, v,  Z_0) \vdash^*  (q, \varepsilon,
  \varepsilon) \}$ where $\vdash^*$ is the reflexive transitive
  closure of $\vdash$.
\end{definition}

\section{$\mu$-Regular Expressions}
\label{sec:mu-regul-expr}

Regular expressions can be extended with a least fixed point operator
$\mu$ to extend their scope to context-free languages
\cite{DBLP:conf/csl/Leiss91}. 

\begin{definition}\label{def:regular-expression}
  The set $\Reg (\Sigma, X)$ of $\mu$-regular pre-expressions over alphabet
  $\Sigma$ and set of variables $X$ is defined as the smallest set such that
  \begin{itemize}
  \item $\Rnull \in \Reg (\Sigma, X)$,
  \item $\Rempty \in \Reg (\Sigma, X)$,
  \item $a\in \Sigma$ implies $a \in \Reg (\Sigma, X)$,
  \item $r, s \in \Reg (\Sigma, X)$ implies $r \cdot  s\in \Reg (\Sigma, X)$,
  \item $r,s \in \Reg (\Sigma, X)$ implies $r+s \in \Reg (\Sigma, X)$,
  \item $r \in \Reg (\Sigma, X)$ implies $r^*\in \Reg (\Sigma, X)$,
  \item $x \in X$ implies $x \in \Reg (\Sigma, X)$,
  \item $r \in \Reg (\Sigma, X \cup \{x\})$ implies $\mu x.r \in \Reg
    (\Sigma, X)$. 
  \end{itemize}

  The set $\Reg (\Sigma)$ of $\mu$-regular expressions over $\Sigma$ is
  defined as  $\Reg (\Sigma) := \Reg (\Sigma, \emptyset)$.
\end{definition}
As customary, we consider the elements of $\Reg (\Sigma,X)$ as abstract syntax trees and freely use
parentheses to disambiguate. We further assume that $*$ 
has higher precedence than $\cdot$, which has higher precedence than
$+$. The $\mu x$-operator binds the recursion variable $x$ with lowest precedence: its scope
extends as far to the right as possible. A variable $x$ occurs free if
there is no enclosing $\mu x$-operator.  A
\emph{closed} expression has no free variables.
\begin{definition}
  The language denoted by a $\mu$-regular pre-expression is defined
  inductively by $\Lang: \Reg(\Sigma, X) \times (X \to \Power (\Sigma^*))
  \to \Power(\Sigma^*)$. Let $\eta \in X \to \Power (\Sigma^*)$ be a
  mapping from variables to languages.
  \begin{itemize}
  \item $\Lang(\Rnull,  \eta ) = \{\}$.
  \item $\Lang(\Rempty,  \eta ) = \{\varepsilon\}$.
  \item $\Lang(a,  \eta ) = \{a\}$ (singleton letter word) for each
    $a\in\Sigma$.
  \item $\Lang(r\cdot s,  \eta ) = \Lang(r, \eta) \cdot \Lang(s, \eta)$.
  \item $\Lang(r+s,  \eta )  = \Lang(r,  \eta ) \cup \Lang(s,  \eta )$.
  \item $\Lang(r^*,  \eta ) = (\Lang(r,  \eta ))^*$.
  \item $\Lang (x,  \eta ) = \eta (x)$.
  \item $\Lang (\mu x. r,  \eta ) = \LFP\ L. \Lang (r, {\eta[x \mapsto L]}) $.
  \end{itemize}
  For an expression  $r \in \Reg (\Sigma)$,
  we write $\Lang (r) := \Lang (r,  \emptyset)$.
\end{definition}
Here, $\LFP$ is the \emph{least fixed point operator} on the complete
lattice $\Power (\Sigma^*)$ (ordered by set inclusion). Its
application in the definition yields
the smallest set $L \subseteq \Sigma^*$ such that $L = \Lang (r, {\eta[x \mapsto L]})
$. This fixed point exists by Tarski's theorem because $\Lang$ is a monotone
function, which is captured precisely in the following lemma.

\begin{lemma}\label{lemma:lang-is-monotone}
  For each finite set $X$, $\eta \in X \to \Power (\Sigma)$, $r \in \Reg
  (\Sigma, X \cup \{x\})$, the function
  $L \mapsto \Lang (r, { \eta[x \mapsto L]})$ is monotone on
  $\Power (\Sigma^*)$. That is, if $L\subseteq L'$, then $\Lang (r, {\eta[x \mapsto L]})
  \subseteq \Lang (r, { \eta[x \mapsto L']})$.
\end{lemma}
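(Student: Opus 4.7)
The plan is to proceed by structural induction on the pre-expression $r \in \Reg(\Sigma, X \cup \{x\})$, showing that for every $\eta$ the function $F_r : L \mapsto \Lang(r, \eta[x \mapsto L])$ is monotone on $\Power(\Sigma^*)$ ordered by inclusion.

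For the base cases $\Rnull$, $\Rempty$, and $a \in \Sigma$, the function $F_r$ is constant and hence trivially monotone. For a variable $y \in X \cup \{x\}$, either $y = x$, in which case $F_y$ is the identity, or $y \neq x$, in which case $F_y$ is the constant $\eta(y)$; both are monotone. The cases $r_1 \cdot r_2$, $r_1 + r_2$, and $r_1^*$ follow directly from the induction hypothesis together with the standard observation that language concatenation, union, and Kleene star are monotone in their arguments: if $U \subseteq U'$ and $V \subseteq V'$ then $U \cdot V \subseteq U' \cdot V'$, $U \cup V \subseteq U' \cup V'$, and $U^* \subseteq U'^*$.

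The interesting case is $r = \mu y.r'$, where we may assume by $\alpha$-renaming that $y \neq x$ and $y \notin X$, so that $r' \in \Reg(\Sigma, X \cup \{x\} \cup \{y\})$. Fix $L \subseteq L' \subseteq \Sigma^*$ and write $\eta_L = \eta[x \mapsto L]$, $\eta_{L'} = \eta[x \mapsto L']$. By applying the induction hypothesis to $r'$, with $y$ playing the role of the distinguished variable and $\eta_L$ (respectively $\eta_{L'}$) playing the role of the ambient environment, the maps $M \mapsto \Lang(r', \eta_L[y \mapsto M])$ and $M \mapsto \Lang(r', \eta_{L'}[y \mapsto M])$ are both monotone on $\Power(\Sigma^*)$, so their least fixed points exist by Tarski. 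Applying the induction hypothesis to $r'$ a second time, this time with $x$ as the distinguished variable and $M$ fixed, gives $\Lang(r', \eta_L[y \mapsto M]) \subseteq \Lang(r', \eta_{L'}[y \mapsto M])$ for every $M$. Hence the two monotone operators on $\Power(\Sigma^*)$ are pointwise ordered, and a standard argument about least fixed points (iterating the smaller operator starting from $\emptyset$ yields an ascending chain below the fixed point of the larger one) yields $\LFP\, M. \Lang(r', \eta_L[y \mapsto M]) \subseteq \LFP\, M. \Lang(r', \eta_{L'}[y \mapsto M])$, which is $F_{\mu y.r'}(L) \subseteq F_{\mu y.r'}(L')$.

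The main obstacle I expect is not any single calculation but keeping the environments and the two different roles of the induction hypothesis straight in the $\mu$ case: one use supplies the monotonicity in the bound variable $y$ that justifies invoking Tarski, and a second use, transported across the two environments $\eta_L$ and $\eta_{L'}$, supplies the pointwise inequality between the two fixed-point operators. Once both are in place, the monotonicity of $\LFP$ with respect to pointwise order of monotone operators closes the case.
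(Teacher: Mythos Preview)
Your proposal is correct and follows the same structural-induction approach as the paper's proof. The paper is much terser in the $\mu$ case, writing only that it ``requires an auxiliary fixed point induction to prove containment''; your treatment spells out what the paper leaves implicit, namely the two distinct invocations of the inductive hypothesis (once in the bound variable $y$ to justify Tarski, once in $x$ to obtain the pointwise inequality between the two operators) and the monotonicity of $\LFP$ in the pointwise order.
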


According to Leiss \cite{DBLP:conf/csl/Leiss91}, it is a folkore
theorem that the languages generated by $\mu$-regular expressions are
exactly the context-free languages.

\begin{theorem}
  $L\subseteq \Sigma^*$ is context-free if and only if there exists a
  $\mu$-regular expression $r \in \Reg (\Sigma)$ such that $L = \Lang (r)$.
\end{theorem}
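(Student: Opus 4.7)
I treat the two directions separately.

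\emph{Every $\mu$-regular expression denotes a context-free language.} I proceed by structural induction on $r \in \Reg(\Sigma, X)$, strengthening the claim to allow free variables: for every $r$ I construct a context-free grammar (in which the variables $x \in X$ play the role of pseudo-terminals) with a designated start symbol $S_r$ such that, when each $x$ is subsequently replaced by an arbitrary string from $\eta(x)$, the language generated by $S_r$ equals $\Lang(r, \eta)$. The base cases and the operations $+$, $\cdot$, and Kleene star are handled by the standard closure constructions for context-free grammars. For $\mu x.\,r$ I introduce a fresh nonterminal $Y$, take the grammar built for the subexpression $r$, add the production $Y \to S_r$, and replace every pseudo-terminal occurrence of $x$ in the existing rules by $Y$; the language generated by $Y$ is then the least fixed point of $L \mapsto \Lang(r, \eta[x \mapsto L])$, which follows from the Knaster--Tarski theorem together with the monotonicity established in Lemma~\ref{lemma:lang-is-monotone}.

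\emph{Every context-free language is denoted by a $\mu$-regular expression.} Starting from a context-free grammar $G$ with nonterminals $A_1, \ldots, A_n$, terminals in $\Sigma$, and start symbol $A_1$, I associate a variable $x_i$ with each $A_i$ and let $r_i \in \Reg(\Sigma, \{x_1, \ldots, x_n\})$ be the sum, over productions $A_i \to \alpha$, of the translation $\tilde\alpha$ that replaces every nonterminal $A_j$ by $x_j$ and keeps terminals unchanged. By the usual least-fixed-point semantics of context-free grammars, the tuple of languages $(L_1, \ldots, L_n)$ generated by the nonterminals of $G$ is the componentwise least solution of the simultaneous system $\{x_i = r_i\}_{i=1}^n$. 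I then eliminate variables one at a time: set $r'_n := \mu x_n.\, r_n$, substitute to obtain $r_i^{(n-1)} := r_i[r'_n/x_n]$ for $i<n$, and iterate on the smaller system. The final closed $\mu$-regular expression obtained for $x_1$ denotes $L = L_1$.

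The main technical step, and the one I expect to require the most care, is the correctness of this successive elimination, namely the Bekic--De~Bakker--Scott identity that the first component of the least simultaneous fixed point of a two-variable monotone system $x_1 = r_1(x_1, x_2)$, $x_2 = r_2(x_1, x_2)$ equals $\mu x_1.\, r_1(x_1,\, \mu x_2.\, r_2(x_1, x_2))$. With Lemma~\ref{lemma:lang-is-monotone} in hand, this identity follows by computing the inner $\mu x_2$ as a monotone function of $x_1$, taking $\mu x_1$ of the resulting monotone operator, and checking both $\subseteq$ and $\supseteq$ against an arbitrary simultaneous pre-fixed point. Induction on $n$ then extends the identity to systems of arbitrary finite size, completing the construction.
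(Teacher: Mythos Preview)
The paper does not actually prove this theorem: it attributes the result to Leiss and calls it ``folklore,'' giving no argument of its own. There is therefore nothing in the paper to compare your proposal against.

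That said, your plan is correct and is the standard route to this folklore result. The first direction (every $\mu$-regular expression denotes a context-free language) is routine; your treatment of $\mu x.r$ by introducing a fresh nonterminal and appealing to Lemma~\ref{lemma:lang-is-monotone} and Knaster--Tarski is exactly right. The second direction is where the content lies, and you have correctly identified the key ingredient: the Beki\'c (De~Bakker--Scott) principle, which lets you rewrite a simultaneous least fixed point as a nest of unary $\mu$-binders. Your inductive elimination of variables from the grammar's equational system is precisely the standard argument, and the two-variable instance you isolate is the base case that carries the induction. One small point worth making explicit when you write it out: the least-fixed-point reading of a context-free grammar's semantics (that the tuple of nonterminal languages is the \emph{least} solution of the grammar's system) is itself a lemma you are relying on, so either cite it or sketch why derivation trees of bounded height give the finite approximants whose union is the least fixed point.
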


Subsequently we will deal syntactically with fixed points. To this end,
we define properties of expressions and substitutions to make
substitution application well-defined.

\begin{definition}
  Let $\X$ be the universe of variables occurring in expressions
  equipped with a strict partial order $\prec$.

  An expression is \emph{order-respecting} if each subexpression
  of the form $\mu x.r$ has only free variables which are strictly before
  $x$: $\forall y\in\FV (\mu x.r), y \prec x$.

  A mapping $\sigma: X \to \Reg (\Sigma, X)$ is \emph{order-closed}
  if $\forall x\in X$, $\sigma (x)$ is order-respecting and
  $\forall y \in \FV(\sigma (x))$, $y\prec x$ and $y\in\DOM
  (\sigma)$. 
\end{definition}
A variable ordering for an expression always exists: assume that all
binders bind different variables and take the topological
sort of the subexpression containment.

We define the application $\ApplySubst\sigma r$ of an order-closed
mapping $\sigma$ to an order-respecting expression $r$ by starting 
to substitute a maximal free variable by its image and repeat this
process until all variables are eliminated.
\begin{definition}
  Let $X \subseteq \X$ a finite set of variables, $r \in \Reg (\Sigma,
  X)$ order-respecting, and $\sigma : X \to
  \Reg (\Sigma, X)$ be order-closed.

  The application $\ApplySubst \sigma r \in 
  \Reg (\Sigma, X)$ yields an expression that is
  defined by substituting for the free variables in $r$ in descending order.
  \begin{align*}
    \ApplySubst \sigma r &=
                           \begin{cases}
                             r & \FV (r) = \emptyset \\
                             \ApplySubst\sigma{r[\sigma(x)/x]}& x \in \max (\FV (r)) \text{ is a
                               maximal element}
                           \end{cases}
  \end{align*}
\end{definition}
Application is well-defined because the variables $x$ are drawn from
the finite set $X$ and the substitution step for $x$ only introduces
new variables that are strictly smaller than $x$ due to
order-closedness.
The outcome does not depend on the choice of the maximal variable
because the unfolding of a maximal variable cannot contain one of the other
maximal variables. 
Furthermore, all intermediate expressions (and thus the result) are order-respecting.

\section{Partial Derivatives}
\label{sec:partial-derivatives}

\begin{figure}[t]
  \begin{minipage}[t]{0.6\linewidth}
  \begin{align*}
    \pderiv{\Rnull}{a} &= \{\} \\
    \pderiv{\Rempty}{a} &= \{\} \\
    \pderiv{b}{a} &= \{ \Rempty \mid a=b \}    \\
    \pderiv{r + s}{a} &= \pderiv{r}{a} \cup \pderiv{s}{a} \\
    \pderiv{r \cdot s}{a} &= \pderiv{r}{a} \cdot s \cup \{ s' \mid \Null
                            (r), s' \in \pderiv{s}{a} \} \\
    \pderiv{r^*}{a} &= \pderiv{r}{a} \cdot r^*
  \end{align*}
  \end{minipage}
  \begin{minipage}[t]{0.4\linewidth}
  \begin{align*}
    \Null (\Rnull) &= \False \\
    \Null (\Rempty) &= \True \\
    \Null (a) &= \False \\
    \Null (r + s) &= \Null (r) \vee \Null (s) \\
    \Null (r \cdot s) &= \Null (r) \wedge \Null (s) \\
    \Null (r^*) &= \True 
  \end{align*}
  \end{minipage}
    \color{black}
  \caption{Antimirov's definition of partial derivatives and nullability}
  \label{fig:partial-derivatives}
\end{figure}

Antimirov \cite{Antimirov96Partial} introduced partial derivatives to
study the syntactic transformation from regular expressions to
nondeterministic and deterministic finite automata. A partial
derivative $\pderiv{r}{a}$ with respect to an input symbol $a$ maps an
expression $r$ to a set of expressions such that their union denotes
the left quotient of $\Lang (r)$. Antimirov's definition corresponds
to the left part of Figure~\ref{fig:partial-derivatives}. We write $\Reg_o (\Sigma)$ for
the set of ordinary regular expressions that neither contain the
$\mu$-operator nor any variables.
We extend $\cdot$ to a function $(\cdot) : \Power (\Reg (\Sigma,X))
\times \Reg (\Sigma,X) \to \Power (\Reg (\Sigma,X))$ on sets of
expressions $R$ defined pointwise by
\begin{align*}
  R \cdot s &= \{ r \cdot s \mid r \in R \}
              \text.
\end{align*}

The definition of partial derivatives relies on nullability, which is tested by a function $\Null
: \Reg_o (\Sigma) \to \bool$. The right side of the figure corresponds to
Antimirov's definition.
\begin{lemma}
  For all $r\in\Reg_o (\Sigma)$, $\Null (r)$ iff $\varepsilon \in \Lang (r)$.
\end{lemma}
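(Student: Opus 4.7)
The plan is to prove the biconditional by structural induction on the expression $r \in \Reg_o(\Sigma)$. Since $\Reg_o(\Sigma)$ rules out the $\mu$-operator and variables, only six syntactic cases remain: $\Rnull$, $\Rempty$, $a$, $r+s$, $r\cdot s$, and $r^*$. In each case I would show directly that the boolean value computed by $\Null$ agrees with the semantic condition $\varepsilon \in \Lang(r)$, where $\Lang(r)$ may be read as $\Lang(r,\emptyset)$ since no free variables occur.

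The three base cases are immediate from the definitions: $\Lang(\Rnull)=\emptyset$ so neither side holds; $\Lang(\Rempty)=\{\varepsilon\}$ so both sides hold; and $\Lang(a)=\{a\}$ with $a\ne\varepsilon$ so both sides fail. For the inductive step on $r+s$, the defining equation $\Null(r+s)=\Null(r)\vee\Null(s)$ combined with the induction hypotheses and the fact that $\varepsilon\in \Lang(r)\cup\Lang(s)$ iff $\varepsilon$ is in one of them gives the equivalence. The Kleene-star case is the easiest inductive one: $\Null(r^*)=\True$ by definition, and the defining clause for $^*$ ensures $\varepsilon\in\Lang(r^*)$ always.

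The only case that requires a small argument beyond unfolding definitions is concatenation. There I would use that $\varepsilon\in \Lang(r)\cdot\Lang(s)$ iff there exist $u\in\Lang(r)$ and $v\in\Lang(s)$ with $uv=\varepsilon$, which forces $u=v=\varepsilon$; hence $\varepsilon\in\Lang(r\cdot s)$ iff $\varepsilon\in\Lang(r)$ \emph{and} $\varepsilon\in\Lang(s)$. This matches the conjunction in $\Null(r\cdot s)=\Null(r)\wedge\Null(s)$ after applying the induction hypotheses to both factors.

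No step poses a real obstacle; the argument is entirely routine structural induction. If anything, the mildest care is needed to observe that $\Lang$ is well-defined on $\Reg_o(\Sigma)$ independently of the environment $\eta$ (since there are no variables), so writing $\Lang(r)$ unambiguously throughout the induction is justified.
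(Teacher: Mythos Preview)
Your proposal is correct: a routine structural induction on $r\in\Reg_o(\Sigma)$ establishes the equivalence, and each of the six cases you outline goes through exactly as you describe. The paper itself does not supply a proof for this lemma; it is stated as a standard fact (the result is classical and appears already in Antimirov's and Brzozowski's work), so your argument is precisely the kind of proof one would expect to fill the gap.
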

\begin{theorem}[Correctness \cite{Antimirov96Partial}]
  For all $r \in \Reg_o (\Sigma)$, $a\in\Sigma$,
  $\Lang ( \pderiv{r}{a}) = \LQ{a} \Lang (r)$.
\end{theorem}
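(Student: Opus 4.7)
The plan is structural induction on $r \in \Reg_o(\Sigma)$. Since $\Reg_o(\Sigma)$ excludes $\mu$ and variables, the induction has only the six cases $\Rnull$, $\Rempty$, $b \in \Sigma$, $r+s$, $r \cdot s$, $r^*$. The theorem statement implicitly lifts $\Lang$ from expressions to sets via $\Lang(R) = \bigcup_{r' \in R} \Lang(r')$; I would state this convention up front and, as a trivial auxiliary, note the distributivity $\Lang(R \cdot s) = \Lang(R) \cdot \Lang(s)$, which is immediate from the pointwise definition of $R \cdot s$ given just above the theorem.

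The base cases are routine. For $r = \Rnull$ and $r = \Rempty$ both $\Lang(\pderiv{r}{a})$ and $\LQ{a}\Lang(r)$ equal $\emptyset$; for $r = b \in \Sigma$ they equal $\{\varepsilon\}$ if $a = b$ and $\emptyset$ otherwise. The sum case $r+s$ reduces to the set identity $\LQ{a}(U \cup V) = \LQ{a}U \cup \LQ{a}V$ combined with two applications of the induction hypothesis.

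The concatenation case $r \cdot s$ is the technical heart of the argument. I would first prove the auxiliary identity
\begin{align*}
  \LQ{a}(U \cdot V) = (\LQ{a}U) \cdot V \cup E_{U,V},
\end{align*}
where $E_{U,V} = \LQ{a}V$ when $\varepsilon \in U$ and $E_{U,V} = \emptyset$ otherwise, by direct word-by-word unfolding: any word on the left splits as $av_1 v_2$ with either $av_1 \in U$ and $v_2 \in V$, or $\varepsilon \in U$ and $a v_2 \in V$. Instantiating with $U = \Lang(r)$ and $V = \Lang(s)$, converting $\varepsilon \in \Lang(r)$ to $\Null(r)$ via the nullability lemma, and invoking the induction hypothesis on $r$ and $s$, this exactly mirrors the syntactic definition $\pderiv{r \cdot s}{a} = \pderiv{r}{a} \cdot s \cup \{s' \mid \Null(r), s' \in \pderiv{s}{a}\}$. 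The Kleene case $r^*$ is then easy: since $\varepsilon \in \Lang(r^*)$ always, any word $av \in \Lang(r^*)$ decomposes as $(aw_1)w_2$ with $aw_1 \in \Lang(r)$ and $w_2 \in \Lang(r^*)$, so $\LQ{a}\Lang(r^*) = (\LQ{a}\Lang(r)) \cdot \Lang(r^*)$; combined with the induction hypothesis on $r$ this equals $\Lang(\pderiv{r}{a} \cdot r^*) = \Lang(\pderiv{r^*}{a})$.

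The main obstacle is the concatenation case, both because of the nullability case split and because it is the only step that really requires the auxiliary identity on $\LQ{a}(U \cdot V)$; every other step is a direct unfolding of definitions. Note that no new issues specific to $\mu$-regular expressions arise here, as the theorem is restricted to $\Reg_o(\Sigma)$; the hard work of extending to fixed points is reserved for later sections.
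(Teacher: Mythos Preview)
Your structural induction is correct and is the standard argument for this classical result. Note, however, that the paper does not actually provide its own proof of this theorem: it is stated with a citation to Antimirov's original paper \cite{Antimirov96Partial} and treated as a known result, serving only as background for the $\mu$-regular generalization developed later. Your proof is essentially Antimirov's, so there is nothing to compare against in this paper.
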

Here we adopt the convention that if $R$ is a set of expressions, then
$\Lang ( R)$ denotes the union of the languages of all expressions:
$\Lang ( R) = \bigcup \{ \Lang (r) \mid r\in R\}$.
\begin{theorem}[Expansion]
  For $r \in \Reg_o (\Sigma)$, $\L (r) = \{ \varepsilon \mid \Null
  (r) \} \cup \bigcup_{a\in\Sigma} a\cdot \Lang (\pderiv{r}{a})$.
\end{theorem}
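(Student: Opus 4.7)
The plan is to prove the Expansion theorem by double inclusion, leveraging both the nullability lemma and the Correctness theorem that immediately precede it. The key observation is that any word in $\Sigma^*$ is either empty or starts with some symbol $a\in\Sigma$, so a simple case split on the first character will align the two sides perfectly.

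For the inclusion $\L(r) \subseteq \{\varepsilon \mid \Null(r)\} \cup \bigcup_{a\in\Sigma} a \cdot \L(\pderiv{r}{a})$, I would take an arbitrary $w \in \L(r)$. If $w = \varepsilon$, then by the nullability lemma $\Null(r)$ holds, so $w$ belongs to the first component $\{\varepsilon \mid \Null(r)\}$. Otherwise $w = a v$ for some $a\in\Sigma$ and $v \in \Sigma^*$; then by definition of left quotient $v \in \LQ{a}\L(r)$, and by the Correctness theorem this equals $\L(\pderiv{r}{a})$. Hence $w = a v \in a \cdot \L(\pderiv{r}{a})$, which lies in the union.

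For the reverse inclusion, I would consider a word $w$ in the right-hand side. If $w$ comes from the $\{\varepsilon \mid \Null(r)\}$ component, then $\Null(r)$ holds and by the nullability lemma $\varepsilon \in \L(r)$. If instead $w = a v$ with $v \in \L(\pderiv{r}{a})$ for some $a\in\Sigma$, then by the Correctness theorem $v \in \LQ{a}\L(r)$, which by definition of left quotient means $a v \in \L(r)$.

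I do not expect any real obstacle: the result is essentially a repackaging of the Correctness theorem with a case distinction on the leading symbol. The only mild subtlety is making sure the convention $\L(R) = \bigcup\{\L(r') \mid r' \in R\}$ is used consistently when passing between the set-of-expressions $\pderiv{r}{a}$ and the language $\L(\pderiv{r}{a})$, but this is a direct unfolding of the convention established just before the theorem.
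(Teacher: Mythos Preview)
Your argument is correct and is exactly the standard derivation of the Expansion theorem from the Correctness theorem and the nullability lemma. Note, however, that the paper does not supply its own proof of this statement: like the surrounding results in Section~\ref{sec:partial-derivatives}, it is quoted as a known fact from Antimirov's work, so there is nothing to compare against beyond observing that your double-inclusion argument is the expected one.
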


Partial derivatives give rise to a nondeterministic finite automaton.

\begin{theorem}[Finiteness \cite{Antimirov96Partial}]
  Let $r\in \Reg_o (\Sigma)$ be a regular expression.
  Define partial derivatives by words by $\pderiv{r}{\varepsilon} =
  \{r\}$ and $\pderiv{r}{aw} = \bigcup \{ \pderiv{s}{w} \mid s \in
  \pderiv{r}{a} \}$ and by a language $L$ by $\pderiv{r}{L} = \bigcup
  \{ \pderiv{r}{w} \mid w\in L\}$.

  The set $\pderiv{r}{\Sigma^*}$ is finite.
\end{theorem}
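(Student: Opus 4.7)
The plan is to prove finiteness by structural induction on $r$, exhibiting for each $r \in \Reg_o(\Sigma)$ a finite set $\pi(r) \subseteq \Reg_o(\Sigma)$ that contains $\{r\} \cup \pderiv{r}{\Sigma^*}$. Concretely, I would define $\pi$ recursively by $\pi(\Rnull) = \{\Rnull\}$, $\pi(\Rempty) = \{\Rempty\}$, $\pi(a) = \{a, \Rempty\}$, $\pi(r+s) = \{r+s\} \cup \pi(r) \cup \pi(s)$, $\pi(r \cdot s) = \{r \cdot s\} \cup (\pi(r) \cdot s) \cup \pi(s)$, and $\pi(r^*) = \{r^*\} \cup (\pi(r) \cdot r^*)$. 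Finiteness of each $\pi(r)$ follows immediately by induction, since the pointwise concatenation $\pi(r) \cdot s$ has cardinality at most $|\pi(r)|$.

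The main technical step is the inclusion $\pderiv{r}{w} \subseteq \pi(r)$ for every $w \in \Sigma^*$, which I would prove by induction on $|w|$, simultaneously with the auxiliary statement that $\pderiv{t}{a} \subseteq \pi(t)$ whenever $t \in \pi(r)$ (closure of $\pi(r)$ under one-step partial derivation). The base $w = \varepsilon$ is immediate since $r \in \pi(r)$. For the inductive step, one uses the defining clauses in Figure~\ref{fig:partial-derivatives}: for instance, if $t = r' \cdot s \in \pi(r \cdot s)$ with $r' \in \pi(r)$, then $\pderiv{t}{a} = \pderiv{r'}{a} \cdot s \cup \{s'' \mid \Null(r'),\, s'' \in \pderiv{s}{a}\}$, and by the inductive hypothesis the first component sits in $\pi(r) \cdot s \subseteq \pi(r \cdot s)$ while the second sits in $\pderiv{s}{a} \subseteq \pi(s) \subseteq \pi(r \cdot s)$. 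The Kleene-star case is analogous: $\pderiv{(r'\cdot r^*)}{a} = \pderiv{r'}{a} \cdot r^* \subseteq \pi(r) \cdot r^* \subseteq \pi(r^*)$.

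Once the closure property is in place, a straightforward induction gives $\pderiv{r}{aw} = \bigcup_{s \in \pderiv{r}{a}} \pderiv{s}{w} \subseteq \pi(r)$, so $\pderiv{r}{\Sigma^*} = \bigcup_w \pderiv{r}{w} \subseteq \pi(r)$, which is finite.

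The step I expect to require the most care is the concatenation case: the shape of derivatives of $r \cdot s$ is heterogeneous (expressions of the form $r' \cdot s$ on one hand, and raw derivatives of $s$ on the other), and one has to verify that both branches remain trapped inside $\pi(r \cdot s)$ under further derivation. Everything else — finiteness of the sets $\pi(r)$, the base cases, and assembling the final union — is bookkeeping.
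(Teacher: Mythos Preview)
The paper does not supply its own proof of this statement; it is quoted as background from Antimirov's original work \cite{Antimirov96Partial}, and the paper's later finiteness argument (Lemma~\ref{lemma:finiteness}) concerns the genuinely new $\mu$-regular case and proceeds by an entirely different route (classifying stack entries as $t$-sorted vectors). So there is nothing in the paper to compare your argument against directly.

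On its own merits your proposal is the standard Antimirov argument and is correct. One small slip worth fixing: in the Kleene-star closure step you write $\pderiv{(r'\cdot r^*)}{a} = \pderiv{r'}{a} \cdot r^*$, but the concatenation rule actually gives
\[
\pderiv{(r'\cdot r^*)}{a} \;=\; \pderiv{r'}{a} \cdot r^* \;\cup\; \{\,s' \mid \Null(r'),\ s' \in \pderiv{r^*}{a}\,\}.
\]
The omitted summand is harmless, since $\pderiv{r^*}{a} = \pderiv{r}{a}\cdot r^* \subseteq \pi(r)\cdot r^* \subseteq \pi(r^*)$, so closure still holds; just make the case complete when you write it up.
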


\begin{theorem}[Nondeterministic finite automaton construction \cite{Antimirov96Partial}]
  Let $r\in \Reg_o (\Sigma)$ be a regular expression and define
  $Q = \pderiv{r}{\Sigma^*}$,
  $\delta : Q \times \Sigma \to \Power (Q)$ by  $(q, a, q') \in
  \delta$ iff $q' \in\pderiv{q}{a}$.
  Let further $q_0 = r$ and $F = \{ q \in Q \mid \Null (q) \}$.

  Then  $\calA = (Q, \Sigma, \delta, q_0, F)$ is a NFA such that 
  $\L (r)= \L (\calA)$.
\end{theorem}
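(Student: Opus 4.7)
The plan is to reduce the NFA correctness claim to the correctness and finiteness theorems for partial derivatives that have already been established. First I would check that $\calA$ is a well-formed NFA: $Q$ is finite by the Finiteness Theorem, and $\delta$ really is a relation on $Q$, because if $q \in \pderiv{r}{w}$ for some $w \in \Sigma^*$ and $q' \in \pderiv{q}{a}$, then by the definition of $\pderiv{\cdot}{\cdot}$ on words we have $q' \in \pderiv{r}{wa} \subseteq \pderiv{r}{\Sigma^*} = Q$.

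Next I would prove an iterated version of the Correctness Theorem by induction on $|w|$:
\begin{align*}
  \L(\pderiv{r}{w}) = \LQ{w}\L(r).
\end{align*}
The base case $w = \varepsilon$ is immediate from $\pderiv{r}{\varepsilon} = \{r\}$. For the step $w = aw'$, the inductive hypothesis applied to each $s \in \pderiv{r}{a}$ gives $\L(\pderiv{s}{w'}) = \LQ{w'}\L(s)$; taking the union over $s$ yields $\L(\pderiv{r}{aw'}) = \LQ{w'}\L(\pderiv{r}{a}) = \LQ{w'}\LQ{a}\L(r) = \LQ{aw'}\L(r)$ by the one-step Correctness Theorem.

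Then I would establish the bijection between runs of $\calA$ on $w = a_0 \dots a_{n-1}$ and chains of partial derivatives: a sequence $q_0 q_1 \dots q_n$ is a run iff $q_0 = r$ and $q_{i+1} \in \pderiv{q_i}{a_i}$ for all $i < n$, which by a straightforward induction is equivalent to $q_n \in \pderiv{r}{w}$. Combining all pieces:
\begin{align*}
  w \in \L(\calA)
  &\iff \exists q \in \pderiv{r}{w}:\ \Null(q) \\
  &\iff \exists q \in \pderiv{r}{w}:\ \varepsilon \in \L(q) \\
  &\iff \varepsilon \in \L(\pderiv{r}{w}) \\
  &\iff \varepsilon \in \LQ{w}\L(r) \\
  &\iff w \in \L(r),
\end{align*}
using the nullability lemma in the second step and the iterated Correctness statement in the fourth.

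There is no real obstacle in this argument because the heavy lifting has been done by the Finiteness and Correctness Theorems; the main care is bookkeeping. The one place to be pedantic is the iterated correctness statement, since it relies on the fact that $\L$ distributes over the set union implicit in $\pderiv{r}{aw'} = \bigcup\{\pderiv{s}{w'} \mid s \in \pderiv{r}{a}\}$ and that left quotient commutes with union, both of which are routine but should be spelled out to justify the chain of equalities above.
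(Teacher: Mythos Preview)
Your argument is correct and is the standard one. Note, however, that the paper does not actually prove this theorem: it is stated as background material and attributed to Antimirov~\cite{Antimirov96Partial}, so there is no ``paper's own proof'' to compare against. Your reduction to the Finiteness and Correctness theorems together with the nullability lemma is exactly how Antimirov's construction is usually justified.

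One minor point of phrasing: when you write that the run condition ``is equivalent to $q_n \in \pderiv{r}{w}$'', what you mean (and use) is that \emph{there exists} a run on $w$ ending in $q$ iff $q \in \pderiv{r}{w}$; the intermediate states are existentially quantified away. This is clear from context, but worth stating precisely.
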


The plan is to extend these results to $\mu$-regular expressions. 
We start with the extension of the nullability function.

\section{Nullability}
\label{sec:nullability}

\begin{figure}[t]
  \begin{align*}
    \Null (\Rnull) \nu &= \False \\
    \Null (\Rempty) \nu &= \True \\
    \Null (a) \nu &= \False \\
    \Null (r + s) \nu &= \Null (r) \nu \vee \Null (s) \nu \\
    \Null (r \cdot s) \nu &= \Null (r)\nu \wedge \Null (s)\nu \\
    \Null (r^*) \nu &= \True \\
    \Null (\mu x. r) \nu & = \LFP\ b.\Null (r)\nu[x \mapsto b] \\
    \Null (x)\nu &= \nu (x)
  \end{align*}
  \caption{Nullability of $\mu$-regular expressions}
  \label{fig:nullability-mu-regular}
\end{figure}
Figure~\ref{fig:nullability-mu-regular} extends nullability to $\mu$-regular expressions. To cater
for recursion, the $\Null$ function obtains as a further argument a nullability environment $\nu$ of
type $X \to \bool$. With this extension, an  expression $\mu x.r$ is deemed nullable
if its body  $r$ is nullable. Furthermore, the least fixed point operator feeds back the nullability of
the body to the free occurrences of the recursion variables. This
fixed point is computed on the
two-element Boolean lattice $\bool$ ordered by $\False \sqsubseteq \True$ with
disjunction $(\vee) : \bool \times \bool \to \bool$ as the
least upper bound operation. Thus, the case for a free variable $x$
obtains its nullability information from the nullability environment. 
\begin{lemma}
  For each $r\in\Reg (\Sigma, X)$, $\Null (r)$ is a monotone function
  from $X \to \bool$ (ordered pointwise) to $\bool$.
\end{lemma}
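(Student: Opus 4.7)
The plan is to proceed by structural induction on $r \in \Reg(\Sigma, X)$, showing in each case that the function $\nu \mapsto \Null(r)\nu$ is monotone with respect to the pointwise order on $X \to \bool$ and the order $\False \sqsubseteq \True$ on $\bool$.

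The base cases $\Rnull$, $\Rempty$, $a$, and $r^*$ are immediate since $\Null(r)\nu$ is a constant in those cases. The variable case $\Null(x)\nu = \nu(x)$ is monotone because it is a projection of the pointwise order. For $r + s$ and $r \cdot s$ the result follows from the induction hypothesis together with the monotonicity of $\vee$ and $\wedge$ on $\bool$: if $\nu \sqsubseteq \nu'$, then $\Null(r)\nu \sqsubseteq \Null(r)\nu'$ and $\Null(s)\nu \sqsubseteq \Null(s)\nu'$ by IH, and $\vee$, $\wedge$ are monotone in both arguments.

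The interesting case is $\mu x.r$, where $\Null(\mu x.r)\nu = \LFP\ b.\Null(r)\nu[x \mapsto b]$. Here $r \in \Reg(\Sigma, X \cup \{x\})$, so by IH, $\Null(r)$ is monotone as a function $(X \cup \{x\}) \to \bool$ to $\bool$. In particular, for each fixed $\nu$, the map $b \mapsto \Null(r)\nu[x \mapsto b]$ is monotone on the two-element lattice $\bool$, so its least fixed point exists by Tarski's theorem. To prove monotonicity in $\nu$, suppose $\nu \sqsubseteq \nu'$. For any $b \in \bool$, the updated environments satisfy $\nu[x \mapsto b] \sqsubseteq \nu'[x \mapsto b]$, hence by IH $\Null(r)\nu[x \mapsto b] \sqsubseteq \Null(r)\nu'[x \mapsto b]$. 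Setting $b' = \LFP\ b.\Null(r)\nu'[x \mapsto b]$, we have $\Null(r)\nu[x \mapsto b'] \sqsubseteq \Null(r)\nu'[x \mapsto b'] = b'$, so $b'$ is a prefixed point of $b \mapsto \Null(r)\nu[x \mapsto b]$, and therefore the least fixed point of this map lies below $b'$. This yields $\Null(\mu x.r)\nu \sqsubseteq \Null(\mu x.r)\nu'$.

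The main obstacle is the $\mu$ case, which requires the standard but slightly subtle argument that the least fixed point of a parameterized monotone family is itself monotone in the parameter. All other cases reduce mechanically to monotonicity of the basic Boolean operations and the induction hypothesis.
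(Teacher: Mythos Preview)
Your proof is correct. The paper states this lemma without proof, so there is nothing to compare against; your structural induction on $r$ with the prefixed-point argument for the $\mu$ case is exactly the standard route one would expect, and it goes through without issue.
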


To prepare for the correctness proof of $\Null$, we first simplify the
case for the fixed point. It turns out that one iteration is sufficient
to obtain the fixed point. This fact is also a consequence of a standard
result, namely that the number of iterations needed to compute the
fixed point of a monotone function on a lattice is bounded by the height
of the lattice. In this case, the Boolean lattice has height one.
\begin{lemma}\label{lemma:epsilon-in-empty-l}
  Let $X$ be a set of variables, $r \in \Reg (\Sigma, X \cup\{x\})$, 
  $\eta : X \to \Power (\Sigma^*)$, and $L\subseteq \Sigma^*$ such
  that $\varepsilon \notin L$.
  If $\varepsilon \notin\Lang (r, { \eta[x \mapsto
  \emptyset]}) $, then $\varepsilon \notin \Lang (r, { \eta[x \mapsto L]})$. 
\end{lemma}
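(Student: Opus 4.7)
I would first strengthen the statement to the following \emph{agreement principle}: for any finite set of variables $Y$, any $r \in \Reg(\Sigma, Y)$, and any two environments $\eta_1, \eta_2 : Y \to \Power(\Sigma^*)$ satisfying $\varepsilon \in \eta_1(y) \Longleftrightarrow \varepsilon \in \eta_2(y)$ for every $y \in Y$, it holds that $\varepsilon \in \Lang(r, \eta_1)$ iff $\varepsilon \in \Lang(r, \eta_2)$. The lemma follows immediately by instantiating with $Y = X \cup \{x\}$, $\eta_1 = \eta[x \mapsto \emptyset]$, and $\eta_2 = \eta[x \mapsto L]$, because $\varepsilon \notin \emptyset$ and $\varepsilon \notin L$ so the two extended environments agree on $\varepsilon$-membership everywhere.

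I would then prove the strengthened statement by structural induction on $r$. The cases $\Rnull$, $\Rempty$, $a$, and $y \in Y$ are immediate from the semantic definition of $\Lang$ and the hypothesis on the environments. For $r + s$, $r \cdot s$, and $r^*$, unfolding the semantic clauses reduces $\varepsilon$-membership on the left to a Boolean combination of $\varepsilon$-memberships in subexpressions (for $r \cdot s$, observe that $\varepsilon = u \cdot v$ forces $u = v = \varepsilon$, turning the concatenation into a conjunction; for $r^*$, $\varepsilon$ is always a member), so the structural induction hypothesis applies directly.

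The main obstacle is the $\mu y.s$ case, which requires both Kleene's fixed-point theorem and a nested induction. By the $\omega$-continuity of the language operations involved in $\Lang$ on the complete lattice $\Power(\Sigma^*)$, the least fixed point $\LFP\ L.\ \Lang(s, \eta_i[y \mapsto L])$ coincides with the countable supremum $\bigcup_{n \in \nat} L_i^{(n)}$ of the Kleene iterates $L_i^{(0)} = \emptyset$ and $L_i^{(n+1)} = \Lang(s, \eta_i[y \mapsto L_i^{(n)}])$; in particular, $\varepsilon$ belongs to the fixed point iff it belongs to some iterate. I would then prove by a nested induction on $n$ that $\varepsilon \in L_1^{(n)} \Longleftrightarrow \varepsilon \in L_2^{(n)}$: the base case $n=0$ is trivial, and the step applies the outer (structural) induction hypothesis to the body $s$ with the extended environments $\eta_i[y \mapsto L_i^{(n-1)}]$, which agree on $\varepsilon$-membership on the original variables by assumption and on $y$ by the previous step of the nested induction. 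Taking the union over $n$ yields the claim for $\mu y.s$ and completes the induction.
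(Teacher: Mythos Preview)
Your proof is correct and takes a genuinely different route from the paper's. The paper inducts directly on the asymmetric single-variable statement, splitting the $\mu$-case into $\mu x.r$ (trivial by shadowing) and $\mu y.r$ with $y\neq x$, where it unrolls the fixed point once and re-invokes the structural hypothesis with the environment shifted at $y$. You instead strengthen to the symmetric two-environment agreement principle and handle every $\mu$-binder uniformly via Kleene iterates together with a nested induction on the iteration index. Your strengthening is precisely what makes the $\mu$-case clean: after one unroll the two environments differ at \emph{both} $x$ and the freshly bound $y$, and the paper's single-variable hypothesis does not directly bridge that gap---its appendix proof is quite terse at exactly this step. Conversely, the paper's argument stays entirely within Tarski/monotonicity, whereas you must appeal to $\omega$-continuity of $\Lang$ to identify the least fixed point with the union of the Kleene approximants. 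As a bonus, your agreement principle is essentially the semantic content of Lemma~\ref{lemma:correctness-of-null}, so your argument effectively establishes both lemmas in a single induction.
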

\begin{lemma}\label{lemma:nullability-fixpoint}
  For all $r\in\Reg (\Sigma,X)$, for all $\nu : X\to \bool$,
  $$\LFP\ b.\Null (r)\nu[x \mapsto b] = \Null (r)\nu[x \mapsto \False].$$ 
\end{lemma}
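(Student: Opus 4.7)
The plan is to exploit the fact that the Boolean lattice $\bool$ has height one, so Kleene iteration of a monotone function on $\bool$ starting at $\False$ reaches a fixed point in at most one step. The preceding lemma already establishes that $b \mapsto \Null(r)\nu[x \mapsto b]$ is monotone; I would set $f(b) = \Null(r)\nu[x \mapsto b]$ and argue directly from monotonicity by a case split on the value of $f(\False)$.

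First case: $f(\False) = \False$. Then $\False$ is a fixed point of $f$, and since $\False$ is the bottom element of $\bool$, it is trivially the least one. Hence $\LFP\ b.\,f(b) = \False = f(\False) = \Null(r)\nu[x \mapsto \False]$, as required.

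Second case: $f(\False) = \True$. By monotonicity, $f(\True) \sqsupseteq f(\False) = \True$, so $f(\True) = \True$, meaning $\True$ is a fixed point of $f$. It is also the least fixed point: any fixed point $b$ satisfies $b = f(b) \sqsupseteq f(\False) = \True$, forcing $b = \True$. Hence $\LFP\ b.\,f(b) = \True = f(\False) = \Null(r)\nu[x \mapsto \False]$, again as required.

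There is no real obstacle here, since the structural induction on $r$ that underlies the monotonicity of $\Null$ has already been discharged by the earlier lemma; the remaining argument is just the general fact that Kleene iteration on a height-one lattice terminates after one unfolding. If desired, one could also present the proof uniformly (without the case split) by invoking Kleene's fixed point theorem: the ascending chain $\bot \sqsubseteq f(\bot) \sqsubseteq f^2(\bot) \sqsubseteq \cdots$ in $\bool$ must stabilise by step one, so its supremum, which is the least fixed point, equals $f(\bot) = \Null(r)\nu[x \mapsto \False]$.
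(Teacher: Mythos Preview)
Your proof is correct and follows essentially the same approach as the paper: both set $F(b)=\Null(r)\nu[x\mapsto b]$, use monotonicity, and perform the same case split on $F(\False)$. The only cosmetic difference is that the paper phrases the argument via the Kleene chain $\bigvee_i F^{(i)}(\False)$ whereas you verify directly that $F(\False)$ is the least fixed point, but the content is identical.
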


For the statement of the correctness, we need to define what it means
for a nullability environment to agree with a language environment.
\begin{definition}
  Nullability environment $\nu : X \to \bool$ \emph{agrees} with
  language environment $\eta : X \to \Power (\Sigma^*)$, written
  $\eta \models \nu$, if for all $x\in X$, $\varepsilon \in \eta (x)$ iff $\nu (x)$.
\end{definition}

\begin{lemma}[Correctness of $\Null$]\label{lemma:correctness-of-null}
  For all $X$,  $r \in \Reg (\Sigma, X)$, $\eta\in X \to \Power (\Sigma^*)$, $\nu \in X \to \bool$,
  such that $\eta \models \nu$, it holds that 
  $\varepsilon \in \Lang (r, {\eta})$ iff $\Null (r)\nu$.
\end{lemma}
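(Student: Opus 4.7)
The proof will go by structural induction on $r \in \Reg(\Sigma, X)$. The base cases $\Rnull$, $\Rempty$, $a$ are immediate from the definitions, and the variable case $x$ uses directly the hypothesis $\eta \models \nu$. The inductive cases for $r+s$, $r\cdot s$, and $r^*$ are routine: one unfolds the semantic clauses of $\Lang$, notes that $\varepsilon \in U \cdot V$ iff $\varepsilon \in U$ and $\varepsilon \in V$, and applies the induction hypotheses along with the matching recursive clauses in Figure~\ref{fig:nullability-mu-regular}. The only place where anything genuinely new happens is the $\mu$-binder case.

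For $r = \mu x.s$ with $s \in \Reg(\Sigma, X \cup \{x\})$, I will first use Lemma~\ref{lemma:nullability-fixpoint} to rewrite
\[
\Null(\mu x.s)\nu \;=\; \LFP\,b.\,\Null(s)\nu[x \mapsto b] \;=\; \Null(s)\nu[x \mapsto \False].
\]
On the semantic side, I need a matching collapse of the fixed point for $\varepsilon$-membership, namely that $\varepsilon \in \LFP L.\,\Lang(s, \eta[x \mapsto L])$ iff $\varepsilon \in \Lang(s, \eta[x \mapsto \emptyset])$. The $(\Leftarrow)$ direction holds because $\Lang(s, \eta[x \mapsto \emptyset])$ is contained in the least fixed point by monotonicity (Lemma~\ref{lemma:lang-is-monotone}). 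For $(\Rightarrow)$, I approximate the fixed point by its Kleene chain $L_0 = \emptyset$, $L_{n+1} = \Lang(s, \eta[x \mapsto L_n])$, take the least $n$ with $\varepsilon \in L_n$, and apply the contrapositive of Lemma~\ref{lemma:epsilon-in-empty-l} to $L = L_{n-1}$ (which satisfies $\varepsilon \notin L_{n-1}$ by minimality of $n$) to conclude $\varepsilon \in L_1 = \Lang(s, \eta[x \mapsto \emptyset])$.

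Once the two sides are both collapsed to one iteration, I can close the case by the induction hypothesis applied to $s$, using the extended environments $\eta[x \mapsto \emptyset]$ and $\nu[x \mapsto \False]$. These agree because $\varepsilon \notin \emptyset$ and $\False$ is false, and agreement on the remaining variables is inherited from $\eta \models \nu$.

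The main obstacle is clearly the $\mu$-case: both sides are defined as least fixed points over different lattices ($\Power(\Sigma^*)$ and $\bool$), so the induction hypothesis does not apply directly. The key insight that unlocks the proof is that, for the purpose of testing $\varepsilon$-membership, the language-side fixed point also stabilises after one iteration, which is precisely what Lemma~\ref{lemma:epsilon-in-empty-l} provides. After this reduction, the two fixed-point computations are brought into structural correspondence and the induction hypothesis finishes the case.
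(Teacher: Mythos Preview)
Your proposal is correct and follows the same approach as the paper: structural induction on $r$, with the $\mu$-case handled by collapsing both fixed points to a single iteration via Lemmas~\ref{lemma:nullability-fixpoint} and~\ref{lemma:epsilon-in-empty-l}, and then invoking the induction hypothesis on the body with $\eta[x\mapsto\emptyset] \models \nu[x\mapsto\False]$. The only cosmetic difference is that you justify the language-side collapse through the Kleene chain (tacitly using Scott-continuity of $L\mapsto\Lang(s,\eta[x\mapsto L])$, which does hold here), whereas the paper simply unrolls the fixed point once and invokes Lemma~\ref{lemma:epsilon-in-empty-l} directly.
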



\section{Derivation}
\label{sec:derivation-v2}

\begin{figure}[t]
  \begin{align*}
    \pderiv[\sigma,\nu]{\Rnull}{\alpha} &= \{\} \\
    \pderiv[\sigma,\nu]{\Rempty}{\alpha} &= \{\} \\
    \pderiv[\sigma,\nu]{b}{\alpha} &= \{ [\Rempty] \mid \alpha=b \in\Sigma \}    \\
    \pderiv[\sigma,\nu]{r + s}{\alpha} &= \pderiv[\sigma,\nu]{r}{\alpha} \cup \pderiv[\sigma,\nu]{s}{\alpha} \\
    \pderiv[\sigma,\nu]{r \cdot s}{\alpha} &= \pderiv[\sigma,\nu]{r}{\alpha}
                                        \cdot (\ApplySubst\sigma s) \cup \{ \SS \mid \Null
                                    (r)\nu, \SS \in \pderiv[\sigma,\nu]{s}{\alpha} \} \\
    \pderiv[\sigma,\nu]{r^*}{\alpha} &= \pderiv[\sigma,\nu]{r}{\alpha} \cdot
                                  (\ApplySubst\sigma r^*)
    \\
    \pderiv[\sigma,\nu]{\mu x.r}{\alpha}
    &= \pderiv[\sigma{[\mu x.r/x]}, \nu{[\Null(r)\nu[\False/x]/x]}]{r}{\alpha}
      \PUSH \SINGLETON\Rempty
    \\ 
    \pderiv[\sigma,\nu]{x}{\alpha} &= \{ \SINGLETON{\ApplySubst\sigma
                                     x} \mid \alpha=\varepsilon \}
  \end{align*}
  \caption{Partial derivatives of $\mu$-regular expressions for
    $\alpha \in \Sigma \cup \{\varepsilon\}$}
  \label{fig:partial-derivatives-v2}
\end{figure}

The derivative for $\mu$-regular expressions has a
\textbf{different type} than for ordinary regular expressions:
A partial derivative is a set of \textbf{non-empty sequences} (i.e.,
stack fragments) of regular expressions. 
The idea is that deriving a recursion operator $\mu x.r$ pushes the current
context on the stack and starts afresh with the derivation of $r$. In
other words, the derivative function for $\mu$-regular expressions has
the same signature as the transition function for a nondeterministic PDA.

To distinguish operations on stacks
from operations on words over $\Sigma$, we write ``$\PUSH$'' (read ``push'') for the concatenation operator on
stacks. We also use this operator for pattern matching parts of a
stack. We write $\EMPTY$ for the empty stack,
$\SINGLETON{r_1, \dots, r_n}$ for a stack with
$n$ elements, and $\RS$ for any stack of expressions.
We extend the concatenation operator for regular expressions to non-empty
stacks by having it operate on the \textbf{last} (bottom) element of a
stack.
\begin{definition}
  Let $(M, (\cdot), \Rempty)$ be a monoid.
  We lift the monoid operation to non-empty stacks $(\cdot) \in M^+ \times M  \to M^+$ 
  for $\overline{a} \in M^*$ and $a, b \in M$ by
  \begin{align*}
    (\overline{a} \PUSH \SINGLETON{a}) \cdot b
    &= (\overline{a} \PUSH \SINGLETON{a \cdot b})
      \text.
  \end{align*}
  We further lift it pointwise to sets $A \subseteq M^+$ to obtain $(\cdot)\in \Power(M^+)
  \times M \to \Power(M^+)$:
  \begin{align*}
    A \cdot b &= \{ \overline{a} \cdot b \mid \overline{a} \in A \}
                \text.
  \end{align*}
\end{definition}
We use this definition for $M= \Reg (\Sigma,X)$ and also extend the
push operation $(:)$ pointwise to sets of stacks.
\begin{align*}
  (\PUSH) &\in \Power(\Reg (\Sigma,X)^+) \times \Reg (\Sigma,X)^+ \to \Power(\Reg (\Sigma,X)^+) \\
  R \PUSH \SS &= \{ \RS  \PUSH \SS \mid \RS \in R \}
\end{align*}
Most of the time, the second argument will be a singleton stack $\SINGLETON{s}$.

Before we discuss the intricacies of the full definition in 
Figure~\ref{fig:partial-derivatives-v2}, let's first consider a naive
extension of the derivative function in
Figure~\ref{fig:partial-derivatives} to $\mu$-regular expressions
and analyze its problems: 
\begin{align*}
  \pderiv{\mu x.r}{a} &= \pderiv{r[\mu x.r/x]}{a} \PUSH \SINGLETON{\Rempty}
                        \tag{naive unrolling: to be revised}
\end{align*}
Taking the derivative of a recursive definition means to apply the derivative to the unrolled
definition. At the same time, we push an empty context on the stack so that the context of the
recursion does not become a direct part of the derivative. This
proposed definition makes sure that the partial derivative
$\pderiv{r}{a}$ is only ever applied to closed expressions $r\in \Reg (\Sigma)$. Hence, the case of a
free recursion variable $x$ would not occur during the computation of $\pderiv{r}{a}$.

\begin{example}\label{example:left-recursion}
The ``naive unrolling'' definition of the partial derivative has a problem. While it can be shown to
be (partially) correct, it is not well-defined for all
arguments. Consider the left-recursive expression $r = \mu x. \Rempty + x \cdot a$,
which is 
equivalent to $a^*$. Computing its partial derivative according to
``naive unrolling'' reveals that it depends on itself, so that
$\pderiv{r}a$ would be undefined.
\begin{align*}
  \underline{\pderiv{r}{a}}
  &= \pderiv{\Rempty + r \cdot a}{a}  \PUSH  \SINGLETON{\Rempty} \\
  &= (\{\} \cup \pderiv{r \cdot a}{a}) \PUSH \SINGLETON{\Rempty} \\
  &= (\pderiv{r}{a} \cdot a \cup \pderiv{a}{a})  \PUSH \SINGLETON{\Rempty} \\
  &= (\underline{\pderiv{r}{a}} \cdot a \cup \{ \SINGLETON\Rempty \}) \PUSH \SINGLETON{\Rempty}
\end{align*}
We remark that the expression $r$
corresponds to a left-recursive grammar, where the naive construction of a top-down parser using the
method of recursive descent also runs into problems
\cite{AhoLamSetiUllman2007}. There would be no problem with the
right-recursive equivalent $r' = \mu x.\Rempty + a\cdot x$ where the
naive unrolling yields $\pderiv{r'}a = \{ [r', \Rempty] \}$.
Indeed, the work by Winter and others \cite{DBLP:conf/calco/WinterBR11} only allows
guarded uses of the recursion operator, which rules out  expressions like
$r$ from the start and which enables them to use the ``naive
unrolling'' definition of the derivative.
\end{example}

For that reason, the derivative must not simply unroll recursions as they are
encountered. Our definition distinguishes between
left-recursive occurrences of a recursion variable, which must not be
unrolled, and guarded occurrences, which can be unrolled
safely. The derivative function remembers deferred unrollings
in a substitution $\sigma$ and applies them only when it is safe.

These observations lead to the signature of the definition of partial derivative in
Figure~\ref{fig:partial-derivatives-v2}. Its type is
\begin{align*}
  \partial &: (\Sigma \cup \{\varepsilon\}) \times (X \to \Reg (\Sigma, X)) \times (X \to
  \bool) \times \Reg (\Sigma, X) \to \Power ( \Reg (\Sigma)^+) 
\end{align*}
and we write it as $\pderiv[\sigma,\nu]{r}{\alpha}$.
It takes a symbol or an empty string $\alpha\in\Sigma \cup
\{\varepsilon\}$ to derive, a substitution $\sigma: X \to \Reg
(\Sigma, X)$ that maps 
free recursion variables to expressions (i.e., their unrollings), a nullability function $\nu:X \to \bool$
that maps free recursion variables to their nullability, and
the regular expression $r \in \Reg (\Sigma, X)$ to derive as arguments and returns the
partial derivatives as a set of non-empty stacks of
expressions. 

Let's examine how the revised definition guarantees well-definedness.
Example~\ref{example:left-recursion} demonstrates that left recursion is the cause for
non-termination of the naive definition. The problem is that the naive definition indiscriminately
substitutes all occurrences of $x$ by its unfolding and propagates the
derivative into the unfolding. However, this substitution is only safe in
guarded positions (i.e., behind at
least one terminal symbol in the unfolding). To avoid substitution in
unguarded positions, the definition in Figure~\ref{fig:partial-derivatives-v2} reifies this
substitution as an additional argument $\sigma$ and 
takes care to only apply it in guarded positions. 

To introduce this recursion, the derived symbol $\alpha$
ranges over $\Sigma\cup\{\varepsilon\}$ in
Figure~\ref{fig:partial-derivatives-v2}. 
For $\alpha=\varepsilon$, the derivative function unfolds one step of
left recursion. 
\begin{example}\label{example:broken-recursion}
  Recall $r = \mu x. \Rempty + x \cdot a$ from Example~\ref{example:left-recursion}. Observe that
  $\Null (r) \emptyset = \Null (\Rempty + x \cdot a) [\False/x] = \True$.
  \begin{align*}
    \pderiv[\emptyset,\emptyset]{r}{a}
    &= (\pderiv[{[r/x],[\True/x]}]{\Rempty + x \cdot a}{a}) \PUSH \SINGLETON{\Rempty} \\
    &= ( \{ \} \cup \pderiv[{[r/x],[\True/x]}]{x \cdot a}{a}) \PUSH \SINGLETON{\Rempty} \\
    &= ( \{ \SINGLETON{\Rempty}\}) \PUSH    \SINGLETON{\Rempty} \\
    &= \{ \SINGLETON{\Rempty, \Rempty}
      \} 
  \end{align*}
  The spontaneous derivative unfolds one level of left recursion.
  \begin{align*}
    \pderiv[\emptyset,\emptyset]{r}{\varepsilon}
    &= (\pderiv[{[r/x],[\True/x]}]{\Rempty + x \cdot a}{\varepsilon}) \PUSH \SINGLETON{\Rempty} \\
    &= ( \{ \} \cup \pderiv[{[r/x],[\True/x]}]{x \cdot a}{\varepsilon}) \PUSH \SINGLETON{\Rempty} \\
    &= ( \{ \SINGLETON{r \cdot a}\}) \PUSH    \SINGLETON{\Rempty} \\
    &= \{ \SINGLETON{r\cdot a, \Rempty}
    \} 
  \end{align*}
\end{example}
Thus, the spontaneous derivative corresponds to
$\varepsilon$-transitions of the PDA 
that is to be constructed.

\section{Correctness}
\label{sec:correctness}

\begin{figure}[t]
  \begin{mathpar}
    \Ruleform{\sigma \vdash w \in r}
    
    \inferrule{}{\sigma \vdash \varepsilon \in \Rempty}

    \inferrule{}{\sigma \vdash a \in a}

    \inferrule{\sigma \vdash w \in r}{\sigma \vdash w \in r+s}

    \inferrule{\sigma \vdash w \in s}{\sigma \vdash w \in r+s}

    \inferrule{\sigma \vdash v \in r \\ \sigma \vdash w \in s}{\sigma
      \vdash vw \in r\cdot s}

    \inferrule{}{\sigma \vdash \varepsilon \in r^*}

    \inferrule{\sigma \vdash v \in r \\ \sigma\vdash w\in r^*}{\sigma
      \vdash vw \in r^*}

    \inferrule{\sigma[\mu x. r/x] \vdash w \in r}{\sigma \vdash w \in
      \mu x.r}

    \inferrule{\sigma \vdash w\in \mu x.r }{
      \sigma[\mu x. r/x]
      \vdash w \in x}
    \\
    \Ruleform{\sigma \vdash w \in \RS}

    \inferrule{}{\sigma \vdash \varepsilon \in []}

    \inferrule{\sigma \vdash v \in r \\ \sigma \vdash w\in \RS}{
      \sigma \vdash vw \in \SINGLETON r \PUSH \RS}
    \\
    \Ruleform{\sigma \vdash w \in R}

    \inferrule{\sigma \vdash w \in \RS \\ \RS \in R}{\sigma \vdash w \in
      R}
  \end{mathpar}
  \caption{Membership in a $\mu$-regular expression, a stack of
    expressions, and a set of stacks}
  \label{fig:inductive-membership-mu-regular}
\end{figure}

To argue about the correctness of our derivative operation, we define
the membership of a word $w\in\Sigma^*$ in the language of an order-respecting
expression $r\in\Reg (\Sigma, X)$ under an order-closed mapping $\sigma: X \to
\Reg (\Sigma,X)$ inductively by the judgment $\sigma \vdash w \in r$ in
Figure~\ref{fig:inductive-membership-mu-regular} along with $\sigma
\vdash w \in \RS$  for an expression stack $\RS$ and  $\sigma
\vdash w \in R$ for a set of such stacks $R \subseteq \Reg (\Sigma,
X)^*$. 
This inductive definition mirrors the previous
fixed point definition of the language of an expression.
\begin{lemma}\label{lemma:inductive-vs-fixpoint}
  For all $r\in \Reg (\Sigma)$ and $w\in\Sigma^*$.
  $\emptyset \vdash w \in r$ iff $w\in \L (r)$.
\end{lemma}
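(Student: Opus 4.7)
A direct structural induction on $r$ breaks at the $\mu$-case because the premise of the $\mu$-introduction rule takes place under an extended substitution. I therefore first strengthen the statement to: for every order-respecting $r \in \Reg (\Sigma, X)$ and every order-closed $\sigma : X \to \Reg (\Sigma, X)$,
\[
  \sigma \vdash w \in r \quad\Longleftrightarrow\quad w \in \Lang (\ApplySubst\sigma r).
\]
The original lemma is the special case $\sigma = \emptyset$, $r$ closed (so $\ApplySubst\emptyset r = r$). Behind the scenes I will use the semantic substitution identity $\Lang (r[s/x], \eta) = \Lang (r, \eta[x \mapsto \Lang (s, \eta)])$ (iterated along the variable order $\prec$ so that it matches the ``descending'' definition of $\ApplySubst\sigma{\cdot}$) and the $\mu$-unfolding equality $\Lang (\mu x.s) = \Lang (s[\mu x.s/x])$ that follows from Tarski's theorem.

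For $(\Rightarrow)$ I induct on the derivation of $\sigma \vdash w \in r$. The cases for $\Rempty$, single letters, $+$, $\cdot$, and ${}^*$ are immediate from the corresponding semantic clauses of $\Lang$. For the $\mu$-introduction rule, the induction hypothesis yields $w \in \Lang (\ApplySubst{\sigma[\mu x.r/x]}r)$; the iterated substitution lemma rewrites this as $w \in \Lang (\ApplySubst\sigma (r[\mu x.r/x]))$, and one application of the unfolding identity to the closed expression $\ApplySubst\sigma (\mu x.r)$ converts it into $w \in \Lang (\ApplySubst\sigma \mu x.r)$. The variable rule is symmetric, consuming one unfolding in the other direction.

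The $(\Leftarrow)$ direction is the main obstacle, because the semantic content of $\mu x.r$ is a least fixed point and there is no obvious structural recipe for constructing a finite derivation. My approach is Park's fixed-point induction. Define $\eta_\sigma(x) := \{ w \mid \sigma \vdash w \in x\}$ and prove $\Lang (r, \eta_\sigma) \subseteq \{ w \mid \sigma \vdash w \in r\}$ by structural induction on $r$. Non-$\mu$ cases reduce to the corresponding inference rule, and the variable case is an equality by definition of $\eta_\sigma$. For $r = \mu x.r'$, the left-hand side is $\LFP\ L. \Lang (r', \eta_\sigma[x \mapsto L])$; by Park it suffices to show that $A := \{ w \mid \sigma \vdash w \in \mu x.r'\}$ is a prefixed point of the body operator, i.e. $\Lang (r', \eta_\sigma[x \mapsto A]) \subseteq A$. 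The $\vdash$-variable rule gives $\eta_{\sigma[\mu x.r'/x]}(x) = A$ while agreeing with $\eta_\sigma$ elsewhere, so the structural induction hypothesis on $r'$ under $\sigma[\mu x.r'/x]$ yields $\sigma[\mu x.r'/x] \vdash w \in r'$, and one $\mu$-introduction step delivers $w \in A$. The strengthened statement is recovered by rewriting $\Lang (\ApplySubst\sigma r) = \Lang (r, \eta_\sigma)$ via the iterated substitution lemma.

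The main technical difficulty is the bookkeeping between the syntactic substitution $\sigma$ and the semantic environment $\eta_\sigma$: one has to verify carefully that the two $\mu$-related rules of the judgment mirror, respectively, the unfolding identity and the prefixed-point property of the LFP, and that the variable rule produces exactly $\eta_\sigma(x) = \{w \mid \sigma \vdash w \in \sigma(x)\}$ so that the Park step is well-typed. The order-respecting/order-closed hypotheses, together with the well-foundedness of $\prec$ on $\X$, are what keep the iterated substitution lemma and the definition of $\ApplySubst\sigma{\cdot}$ well-defined throughout.
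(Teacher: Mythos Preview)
Your proposal is correct and follows essentially the same architecture as the paper's proof: both directions generalise over an arbitrary order-closed $\sigma$; the $(\Rightarrow)$ direction is by induction on the derivation of $\sigma\vdash w\in r$; the $(\Leftarrow)$ direction is by structural induction on $r$ with a fixed-point (Park) argument at the $\mu$-case. The paper phrases the generalisation semantically---working with an environment $\eta$ satisfying $\eta(x)=\Lang(\sigma(x),\eta\setminus x)$ rather than with the syntactically substituted expression $\ApplySubst\sigma r$---so it avoids invoking an iterated substitution lemma, whereas you route through $\Lang(\ApplySubst\sigma r)$ and then a separate $\eta_\sigma$; but the inductive skeleton and the use of fixed-point induction are identical.

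One small caveat: your last sentence (``the strengthened statement is recovered by rewriting $\Lang(\ApplySubst\sigma r)=\Lang(r,\eta_\sigma)$'') is circular in full generality, since that identity at $r=x$ is exactly the strengthened claim. This does not matter for the target lemma---specialising to $\sigma=\emptyset$ makes both $\Lang(\ApplySubst\emptyset r)$ and $\Lang(r,\eta_\emptyset)$ equal to $\Lang(r)$ trivially---but you should not advertise the full biconditional for arbitrary $\sigma$ without an additional outer induction on the variable order. Also, your claim that $\eta_{\sigma[\mu x.r'/x]}$ agrees with $\eta_\sigma$ on variables $y\ne x$ tacitly uses a weakening property of the judgment (derivations of $\sigma\vdash w\in y$ never inspect $\sigma$ at variables $\succ y$); this holds by order-respecting/order-closed, but is worth stating.
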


It is straightforward to prove the following derived rule.
\begin{lemma}\label{lemma:membership-subset}
  If $R \subseteq S \subseteq \Reg (\Sigma, X)^*$,
  then $\sigma \vdash w \in R$ implies $\sigma \vdash w \in S$.
\end{lemma}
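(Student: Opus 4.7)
The plan is to observe that the judgment $\sigma \vdash w \in R$ is defined by exactly one rule in Figure~\ref{fig:inductive-membership-mu-regular}, namely the rule that concludes $\sigma \vdash w \in R$ from the premises $\sigma \vdash w \in \RS$ and $\RS \in R$. So the proof is a single inversion followed by one rule application, with no induction on the derivation needed.

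Concretely, I would assume $R \subseteq S$ and $\sigma \vdash w \in R$. Inversion on the last rule applied yields some stack $\RS \in \Reg(\Sigma, X)^*$ with $\sigma \vdash w \in \RS$ and $\RS \in R$. From $\RS \in R$ and $R \subseteq S$ we conclude $\RS \in S$. Reapplying the set-membership rule with $\RS$ as the witness and $S$ in place of $R$ yields $\sigma \vdash w \in S$.

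There is no main obstacle here: the judgments for stacks and for expressions are not touched, because the statement only concerns the top-level set-membership layer, which has a single introduction rule. The lemma is essentially a restatement of the fact that the set-level judgment quantifies existentially over elements of the set, so monotonicity in that set is immediate.
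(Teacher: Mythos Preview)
Your proposal is correct and is exactly the straightforward argument the paper alludes to; the paper itself provides no proof beyond remarking that the rule is ``straightforward to prove,'' and your single inversion plus reapplication of the set-membership rule is precisely what is intended.
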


\begin{lemma}\label{lemma:membership-apply-substitution}
  Let $r\in \Reg (\Sigma, X)$ and $\sigma : X \to \Reg (\Sigma, X)$ be order-respecting.
  If $\sigma \vdash w \in r$, then $\emptyset \vdash w \in \ApplySubst\sigma r$.
\end{lemma}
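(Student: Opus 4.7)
The plan is to proceed by induction on the derivation of $\sigma \vdash w \in r$. The base cases ($\Rempty$, $a$) and the structural cases for $+$, $\cdot$, and ${}^*$ are routine once we establish the distribution identities $\ApplySubst\sigma{r+s} = \ApplySubst\sigma r + \ApplySubst\sigma s$, $\ApplySubst\sigma{r\cdot s} = \ApplySubst\sigma r \cdot \ApplySubst\sigma s$, and $\ApplySubst\sigma{r^*} = (\ApplySubst\sigma r)^*$, which follow by unfolding the iterated-substitution definition of $\ApplySubst\sigma{-}$ and using order-closedness to prevent capture. Each such case then invokes the IH on the sub-derivations and reassembles via the matching rule of Figure~\ref{fig:inductive-membership-mu-regular} under the empty substitution.

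The interesting cases are the variable rule and the $\mu$ rule. In the variable case, $\sigma[\mu x.r/x] \vdash w \in x$ arises from the premise $\sigma \vdash w \in \mu x.r$, to which the IH applies to yield $\emptyset \vdash w \in \ApplySubst\sigma{\mu x.r}$. Unfolding $\ApplySubst{}$ shows $\ApplySubst{\sigma[\mu x.r/x]}{x} = \ApplySubst\sigma{\mu x.r}$: the sole maximal free variable of $x$ is substituted by $\mu x.r$, and since $\mu x.r$ is order-respecting its remaining free variables all lie strictly below $x$, so the remaining iterations proceed using $\sigma$ alone. In the $\mu$ case, the premise $\sigma[\mu x.r/x] \vdash w \in r$ yields by IH $\emptyset \vdash w \in \ApplySubst{\sigma[\mu x.r/x]}{r}$, which similarly unfolds to $\emptyset \vdash w \in \ApplySubst\sigma{r[\mu x.r/x]}$. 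By the binder-distribution identity $\ApplySubst\sigma{\mu x.r} = \mu x.\ApplySubst\sigma r$ (valid because $x$ is fresh with respect to the range of $\sigma$ by order-closedness), the goal is $\emptyset \vdash w \in \mu x.\ApplySubst\sigma r$.

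I would close the $\mu$ case using two auxiliary identities. First, a substitution-commutation lemma
\[
\ApplySubst\sigma{r[\mu x.r/x]} \;=\; (\ApplySubst\sigma r)\bigl[\mu x.\ApplySubst\sigma r \,/\, x\bigr],
\]
and second, an explicit-substitution lemma: for any closed expression $t$ and order-respecting $s$, $\emptyset \vdash w \in s[t/x]$ implies $[t/x] \vdash w \in s$. Applied in sequence they transform the IH into $[\mu x.\ApplySubst\sigma r/x] \vdash w \in \ApplySubst\sigma r$, whereupon the $\mu$ rule of Figure~\ref{fig:inductive-membership-mu-regular} at the empty substitution delivers $\emptyset \vdash w \in \mu x.\ApplySubst\sigma r$, completing the case.

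The main obstacle is the substitution-commutation lemma: it asserts that the eager, iterated substitution $\ApplySubst\sigma{-}$ commutes with a single explicit substitution for $x$, and the delicate point is that $\ApplySubst\sigma{-}$ is a multi-step process. I expect it to succumb to an inner induction on $|\FV(r)|$, where at each step order-closedness guarantees that substituting some $\sigma(y)$ for a variable $y \prec x$ neither introduces a fresh $x$ nor interferes with the explicit substitution. The explicit-substitution lemma is a more standard induction on $s$; its $\mu$ and variable subcases require careful bookkeeping of the order-closed environments but no new ideas.
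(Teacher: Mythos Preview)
Your overall plan---induction on the derivation of $\sigma \vdash w \in r$, with the variable case handled via the identity $\ApplySubst{\sigma[\mu x.r/x]}{x} = \ApplySubst\sigma{(\mu x.r)}$---matches the paper exactly. The paper also arrives, in the $\mu$ case, at $\emptyset \vdash w \in \ApplySubst{\sigma[\mu x.r/x]}{r}$ and then closes with the phrase ``by fixed point folding''; you are right to try to make that step explicit.

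The problem is your explicit-substitution lemma. First, as stated (``for any closed expression $t$'') it is false: if $s=x$ and $t$ is not of the form $\mu x.t'$, there is no rule that derives $[t/x]\vdash w\in x$, yet $\emptyset\vdash w\in t$ may well hold. You need $t=\mu x.t'$ at minimum. Second, and more seriously, the promised ``standard induction on $s$'' does not close at nested binders. For $s=\mu y.s'$ with $y\neq x$, inversion of $\emptyset\vdash w\in \mu y.s'[t/x]$ gives $[\mu y.s'[t/x]/y]\vdash w\in s'[t/x]$, which is not an instance of your hypothesis (the left environment is no longer $\emptyset$). Generalizing to an arbitrary environment $\tau$ on the left leads you to compare $\tau[\mu y.s'[t/x]/y]$ against $\tau[\mu y.s'/y][t/x]$, and these assign \emph{different} expressions to $y$; reconciling them essentially requires the converse of the very lemma you are proving. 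The ``careful bookkeeping'' you allude to is a real obstacle, not routine.

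A cleaner way to discharge the $\mu$ case, and the most natural reading of the paper's ``fixed point folding'', is a semantic detour via Lemma~\ref{lemma:inductive-vs-fixpoint}. Both $\ApplySubst{\sigma[\mu x.r/x]}{r}$ and $\ApplySubst{\sigma}{\mu x.r}$ are closed, so inductive membership coincides with language membership. Your substitution-commutation lemma (which \emph{is} needed, and is proved as you suggest) identifies the former as the one-step unfolding $e[\mu x.e/x]$ of the latter $\mu x.e$; the semantic fixed-point equation $\L(\mu x.e)=\L(e,[x\mapsto\L(\mu x.e)])$ together with the substitution lemma for $\L$ then gives $\L(e[\mu x.e/x])=\L(\mu x.e)$, and Lemma~\ref{lemma:inductive-vs-fixpoint} transports this back to the inductive judgment. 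This avoids the explicit-substitution lemma entirely.
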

The derivation closure $\cderiv\RS a$ of a non-empty closed stack of expressions is defined by
the union of the partial derivatives after taking an arbitrary number
of $\varepsilon$-steps. It is our main tool in proving the correctness
of the derivative.
\begin{definition}
  For $a\in \Sigma$, the derivation closure $\cderiv[\sigma,\nu]{r \PUSH \RS} a$
  is inductively defined as the smallest set of stacks such that
  \begin{enumerate}
  \item $\cderiv[\sigma,\nu]{r \PUSH \RS} a \supseteq \pderiv[\sigma,\nu]{r}{a} \PUSH \RS$ and
  \item $\cderiv[\sigma,\nu]{r \PUSH \RS} a \supseteq \bigcup \{ \cderiv[\sigma,\nu]{\SS \PUSH
    \RS}{a} \mid \SS \in \pderiv[\sigma,\nu]{r}{\varepsilon} \}$.
  \end{enumerate}
\end{definition}
\begin{lemma}[Unfolding]\label{lemma:derivation-unfolding}
  Let $r \in \Reg (\Sigma, X)$ an order-respecting expression,
  $\sigma : X \to \Reg (\Sigma, X)$ order-closed with $\sigma (x) =
  \mu x.s_x$,
  $\nu : X \to \bool$ such that $\nu(x) =
  \Null (\ApplySubst \sigma x)\emptyset$.
  \begin{align*}
    \sigma \vdash w \in r & \Leftarrow
    \emptyset \vdash w \in \pderiv[\sigma,\nu]{ r}{\varepsilon}
  \end{align*}
\end{lemma}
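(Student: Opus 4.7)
The plan is to proceed by structural induction on $r$, with $w$, $\sigma$, and $\nu$ universally quantified so that the inductive hypothesis can be instantiated at different environments in the $\mu$-case. Before starting the induction I would isolate an auxiliary converse of Lemma~\ref{lemma:membership-apply-substitution}: for order-respecting $r$ and order-closed $\sigma$, $\emptyset \vdash w \in \ApplySubst\sigma r$ implies $\sigma \vdash w \in r$. This can be proved by induction on $|\FV(r)|$; the base case uses a straightforward weakening lemma for the closed expression $\ApplySubst\sigma r = r$, and the step case peels off a maximal free variable $x$, rewrites $\ApplySubst\sigma r$ as $\ApplySubst\sigma(r[\sigma(x)/x])$, and then uses the inductive rules for $\mu x.s_x$ and $x$ to reintroduce $x$ into the environment.

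For the base cases $\Rnull$, $\Rempty$, and $b$ the $\varepsilon$-derivative is empty and the hypothesis is vacuous. For $r+s$ the derivative is a union and the claim follows from the IH on the summand that witnesses $w$ plus the $+$-rule. For $r\cdot s$, split a stack-level decomposition $w=w_1 w_2$ so that $w_1$ witnesses some stack in $\pderiv[\sigma,\nu]{r}{\varepsilon}$ and $w_2$ witnesses $\ApplySubst\sigma s$; then the IH on $r$ gives $\sigma\vdash w_1\in r$, the auxiliary converse lemma gives $\sigma\vdash w_2\in s$, and the $\cdot$-rule closes the case. For the nullability branch, I would translate $\Null(r)\nu$ into $\sigma\vdash\varepsilon\in r$ by combining Lemma~\ref{lemma:correctness-of-null} with the language environment $\eta_\sigma := \lambda y.\L(\ApplySubst\sigma y)$, which by hypothesis agrees with $\nu$, and then bridging from fixed-point membership to inductive membership under $\sigma$ by (a variant of) Lemma~\ref{lemma:inductive-vs-fixpoint}. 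The Kleene-star case is structurally analogous.

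The interesting cases are $\mu x.r$ and $x$. For $\mu x.r$, the derivative is $\pderiv[\sigma',\nu']{r}{\varepsilon}\PUSH\SINGLETON\Rempty$ with $\sigma'=\sigma[\mu x.r/x]$ and $\nu'=\nu[\Null(r)\nu[\False/x]/x]$; since $\Rempty$ only witnesses $\varepsilon$, the trailing singleton is absorbed and $\emptyset\vdash w\in\pderiv[\sigma',\nu']{r}{\varepsilon}$. Before invoking the IH at the strictly smaller body $r$, I have to verify that $(\sigma',\nu')$ still satisfies the standing hypothesis, i.e.\ $\nu'(x)=\Null(\ApplySubst{\sigma'}x)\emptyset$; this reduces, via Lemma~\ref{lemma:nullability-fixpoint} applied to $\ApplySubst\sigma(\mu x.r)$ together with a routine commutation of $\Null$ with closed substitutions, to the definition of $\nu'$. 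The IH then yields $\sigma'\vdash w\in r$, and the $\mu$-rule gives $\sigma\vdash w\in\mu x.r$. For the variable case, the derivative is $\{\SINGLETON{\ApplySubst\sigma x}\}$, so $\emptyset\vdash w\in\ApplySubst\sigma x=\ApplySubst\sigma(\mu x.s_x)$; the auxiliary converse lemma gives $\sigma\vdash w\in\mu x.s_x$, and since $\sigma[\mu x.s_x/x]=\sigma$, the inductive rule for $x$ concludes $\sigma\vdash w\in x$.

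The hard part will be the auxiliary converse of Lemma~\ref{lemma:membership-apply-substitution}, which is what actually \emph{re-rolls} the deferred unfoldings stored in $\sigma$ back into $\mu$-bindings in the membership derivation; this is the technical core that makes both the variable case and the non-$\Null$ parts of the concatenation and star cases go through. A secondary but delicate bookkeeping obligation is the verification in the $\mu$-case that the updated $(\sigma',\nu')$ still satisfies the hypothesis on $\nu$, which depends essentially on Lemma~\ref{lemma:nullability-fixpoint}.
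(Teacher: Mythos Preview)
The paper does not actually supply a proof of Lemma~\ref{lemma:derivation-unfolding}; it is stated and then used (in the proof of Theorem~\ref{lemma:automaton-correctness}) without argument. The closest template the paper offers is the right-to-left direction of Theorem~\ref{lemma:derivation-correct}, which is structured as an inner case analysis on $r$ together with an outer induction on the derivation size, and which treats the concatenation, star, $\mu$, and variable cases in the same spirit as your plan. Your structural induction on $r$ is an equally natural organization here, and is in fact simpler because for the pure $\varepsilon$-derivative the variable case terminates immediately at $\SINGLETON{\ApplySubst\sigma x}$ and does not feed back into another derivative; hence no ``smaller derivation'' argument is required.

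Your identification of the converse of Lemma~\ref{lemma:membership-apply-substitution} as the technical core is correct and worth highlighting: the paper silently uses exactly this direction in the proof of Theorem~\ref{lemma:derivation-correct} (``\dots and thus by Lemma~\ref{lemma:membership-apply-substitution} $\sigma \vdash w_{n+1} \in s$''), even though the lemma as stated only gives the forward implication. So what you call the ``hard part'' is in fact an unstated lemma that the paper relies on elsewhere; proving it cleanly, as you outline, fills a real gap. One small refinement for the variable case: from $\emptyset \vdash w \in \ApplySubst\sigma(\mu x.s_x)$ the converse lemma most naturally yields $(\sigma\setminus x) \vdash w \in \mu x.s_x$ (since $x\notin\FV(\mu x.s_x)$), after which the variable rule produces $(\sigma\setminus x)[\mu x.s_x/x] = \sigma \vdash w \in x$; your formulation ``$\sigma[\mu x.s_x/x]=\sigma$'' amounts to the same thing but glosses over the domain bookkeeping. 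The $\nu'$ check in the $\mu$-case via Lemma~\ref{lemma:nullability-fixpoint} is likewise the right move and matches how the paper sets up $\hat\nu$ throughout.
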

\begin{theorem}[Correctness]\label{lemma:derivation-correct}
  Let $r \in \Reg (\Sigma, X)$ an order-respecting expression,
  $\sigma : X \to \Reg (\Sigma, X)$ order-closed with $\sigma (x) =
  \mu x.s_x$,
  $\nu : X \to \bool$ such that $\nu(x) =
  \Null (\ApplySubst \sigma x)\emptyset$.
  \begin{align*}
    \sigma \vdash aw \in r & \Leftrightarrow
    \emptyset \vdash w \in \cderiv[\sigma,\nu]{\SINGLETON r}{a}
  \end{align*}
\end{theorem}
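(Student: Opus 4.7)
The plan is to prove both implications together. For the forward direction ($\Rightarrow$) I induct on the derivation of $\sigma \vdash aw \in r$ rather than on the structure of $r$, because the variable case reduces to a strictly shorter derivation $\sigma'' \vdash aw \in \mu x.s_x$ on an expression that is not structurally smaller than $x$. For the backward direction ($\Leftarrow$) I induct on the construction of $\emptyset \vdash w \in \cderiv[\sigma,\nu]{\SINGLETON r}{a}$, with a structural case analysis on $r$ at the base step and an appeal to the Unfolding Lemma at each $\varepsilon$-step. Throughout, I maintain the invariant $\nu(x) = \Null(\ApplySubst\sigma x)\emptyset$ across every recursive call.

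For the forward direction, the base cases ($\Rnull$, $\Rempty$, $b$) are direct. For $r_1 + r_2$, pick the summand witnessing the derivation. For $r_1 \cdot r_2$, decompose $aw = vu$: if $v = av'$, apply the IH to $r_1$ and thread $\ApplySubst\sigma r_2$ through the bottom of the witnessing stack via Lemma~\ref{lemma:membership-apply-substitution}; if $v = \varepsilon$, use Lemma~\ref{lemma:correctness-of-null} to obtain $\Null(r_1)\nu$ and apply the IH to $r_2$. The $r^*$ case follows the same template. For $\mu x.s$, invert the $\mu$-rule to $\sigma[\mu x.s/x] \vdash aw \in s$, apply the IH to $s$ under the updated $\sigma', \nu'$ (after verifying the invariant still holds), and adjoin $\SINGLETON\Rempty$ at the bottom of the witness. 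For $x$, invert the variable rule to the shorter derivation $\sigma'' \vdash aw \in \mu x.s_x$, apply the IH, and prepend one $\varepsilon$-closure step via $\pderiv[\sigma,\nu]{x}{\varepsilon} = \{\SINGLETON{\ApplySubst\sigma x}\}$.

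For the backward direction, the base case $\emptyset \vdash w \in \pderiv[\sigma,\nu]{r}{a}$ unfolds by structural case analysis on $r$: each constructor produces a specific stack form that inverts into $\sigma \vdash aw \in r$. Concatenated expressions appearing at the bottom of such stacks (for example, $\ApplySubst\sigma r_2$ in the product case) are handled via the converse direction of Lemma~\ref{lemma:membership-apply-substitution}, which I would record as an auxiliary fact. The inductive step is $\emptyset \vdash w \in \cderiv[\sigma,\nu]{\SS}{a}$ for some $\SS \in \pderiv[\sigma,\nu]{r}{\varepsilon}$; after the IH on the strictly smaller closure derivation reconstructs a stack-level membership $\emptyset \vdash aw \in \SS$, the Unfolding Lemma (Lemma~\ref{lemma:derivation-unfolding}) folds this back into $\sigma \vdash aw \in r$.

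The principal obstacle is the interaction of the $\mu$ and $x$ cases with the nullability bookkeeping. Verifying that the combined update $\nu' = \nu[\Null(s)\nu[\False/x]/x]$ satisfies $\nu'(y) = \Null(\ApplySubst{\sigma[\mu x.s/x]}{y})\emptyset$ for every $y$ requires collapsing the Boolean fixed point via Lemma~\ref{lemma:nullability-fixpoint} (one iteration suffices) and then relating nullability across substitutions via Lemma~\ref{lemma:correctness-of-null}. A secondary technical challenge is that the backward-direction inductive step operates on non-singleton stacks $\SS$ arising from $\varepsilon$-derivatives—stacks whose top element may be a variable or a $\mu$-binder and hence not structurally smaller than $r$—so the induction must proceed on a well-founded measure combining closure-derivation height with the depth of pending variable unfoldings, and it will likely go through a strengthened statement covering the specific class of stacks produced by $\pderiv[\sigma,\nu]{\cdot}{\varepsilon}$.
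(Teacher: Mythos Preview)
Your forward direction is essentially the paper's: both induct on the derivation of $\sigma \vdash aw \in r$, and your handling of the $\mu$ and variable cases matches the paper's.

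Your backward direction takes a genuinely different route. You induct on the depth of the closure construction (number of $\varepsilon$-steps that produced the witnessing stack), with a structural analysis of $r$ at the base and the Unfolding Lemma at each step. As you correctly anticipate, the step case forces you to prove a stack-level generalisation: from $\emptyset \vdash w \in \cderiv[\emptyset,\emptyset]{\SS}{a}$ you need $\emptyset \vdash aw \in \SS$ for the (in general non-singleton, closed) stacks $\SS$ produced by $\pderiv[\sigma,\nu]{r}{\varepsilon}$, and only then does Lemma~\ref{lemma:derivation-unfolding} close the gap back to $\sigma \vdash aw \in r$. This is workable but heavier than necessary.

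The paper avoids the stack-level detour by choosing a different outer measure: it inducts on the size of the \emph{membership} derivation $\emptyset \vdash w \in \RS$ (not on closure depth), with an inner structural induction on $r$. The point is that the closure commutes with each constructor, e.g.\ $\cderiv[\sigma,\nu]{\SINGLETON{\mu x.r}}{a} = \cderiv[\hat\sigma,\hat\nu]{\SINGLETON r}{a} \PUSH \SINGLETON\Rempty$ and $\cderiv[\hat\sigma,\hat\nu]{\SINGLETON x}{a} = \cderiv[\sigma,\nu]{\SINGLETON{\mu x.r}}{a}$. In the variable case the witnessing stack therefore has the form $\RS' \PUSH \SINGLETON\Rempty$ with $\RS' \in \cderiv[\hat\sigma,\hat\nu]{\SINGLETON r}{a}$; stripping the trailing $\Rempty$ makes the membership derivation $\emptyset \vdash w \in \RS'$ strictly smaller, so the \emph{outer} hypothesis applies even though $r$ is not a subterm of $x$. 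This single observation replaces your stack-level strengthening and the well-founded measure you sketch at the end.
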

\begin{proof}
  The direction from left to right is proved by induction on  $\sigma
  \vdash aw \in r$. 

  We demonstrate the right-to-left direction. 

  Suppose that
  $\Delta = \emptyset \vdash w \in \cderiv[\sigma,\nu]{\SINGLETON r}a$ and show
  that
  $\sigma \vdash aw \in r$.

  The proof is by induction on the size of the derivation of $\Delta$. 
  Inversion yields that there is some $\RS \in
  \cderiv[\sigma,\nu]{\SINGLETON r}a$ such that $\emptyset \vdash w
  \in \RS$. Now there are two cases.

  \textbf{Case }$w=\varepsilon$ and $\RS=\SINGLETON{}$ so that the
  empty-sequence-rule $\emptyset \vdash \varepsilon \in \SINGLETON{}
  $ applies. But this case cannot happen because $\RS\ne\SINGLETON{}$.

  \textbf{Case }$\emptyset \vdash vw \in \SINGLETON{s}\PUSH\RS$
  because $\emptyset \vdash v \in s$ and $\emptyset \vdash w \in
  \RS$.

  These two cases boil down to $w=w_1\dots w_n$, $\RS = [r_1, \dots, r_n]$, for some $n\ge1$, and
  $\emptyset \vdash w_1\dots w_n \in [r_1, \dots, r_n]$ because
  $\emptyset \vdash w_i \in r_i$.

  We perform an inner induction on $r$.

  \textbf{Case }$\Rnull$, $\Rempty$, $b\ne a$: contradictory because
  $\cderiv[\sigma,\nu]{\SINGLETON r}a = \emptyset$.

  \textbf{Case }$a$: $\cderiv[\sigma,\nu]{\SINGLETON a}a =
  \{\SINGLETON\Rempty\}$ so that $w=\varepsilon$. Clearly, $\sigma
  \vdash a \in a$.

  \textbf{Case }$r+s$: We can show that $\cderiv[\sigma,\nu]{r+s}a =
  \cderiv[\sigma,\nu]{r}a \cup \cderiv[\sigma,\nu]{s}a$. Assuming that
  $\RS \in \cderiv[\sigma,\nu]{r}a$,  induction on
  $r$ yields $\sigma \vdash aw \in r$ and the $+$-rule yields  $\sigma
  \vdash aw \in r+s$. Analogously for $s$.

  \textbf{Case }$r \cdot s$: We can show that
  $\cderiv[\sigma,\nu]{r\cdot s}a =
  \cderiv[\sigma,\nu]{r}a \cdot (\ApplySubst\sigma s) \cup \{ \SS
  \mid \Null (r)\nu, \SS \in \cderiv[\sigma,\nu]{s}a \}$. There are two cases.

  \textbf{Subcase }$\RS \in \cderiv[\sigma,\nu]{r}{a}\cdot
  (\ApplySubst\sigma s)$. Hence, $\RS = [r_1, \dots, r_n \cdot 
  (\ApplySubst\sigma s)]$ so that $w= w_1\dots w_nw_{n+1}$ and $\emptyset
  \vdash w_1 \in r_1$, \dots, $\emptyset \vdash w_n \in r_n$, and
  $\emptyset \vdash w_{n+1} \in(\ApplySubst\sigma s)$.
  Now, $\RS' = [r_1,\dots, r_n] \in \cderiv[\sigma,\nu]{r}{a}$ and thus
  $\emptyset \vdash w_1\dots w_n \in \cderiv[\sigma,\nu]{r}{a}$. By
  induction, $\sigma \vdash aw_1\dots w_n \in r$. Because
  $\ApplySubst\sigma s$ is closed, we also have $\emptyset \vdash
  w_{n+1} \in(\ApplySubst\sigma s)$ and thus by
  Lemma~\ref{lemma:membership-apply-substitution} $\sigma \vdash
  w_{n+1} \in s$. Taken together $\sigma \vdash aw_1\dots w_nw_{n+1}
  \in r\cdot s$.

  \textbf{Subcase }$\Null(r)\nu$ and $\RS \in
  \pderiv[\sigma,\nu]{s}{a}$. Hence, $\sigma \vdash \varepsilon \in
  r$, by induction $\sigma \vdash aw\in s$, and the concatenation rule yields
  $\sigma \vdash aw \in r\cdot s$.

  \textbf{Case }$r^*$. Because $\RS \in \cderiv[\sigma,\nu]{r}{a}
  \cdot(\ApplySubst\sigma {r^*})$, it must be that $\RS = [r_1,\dots,
  r_n\cdot(\ApplySubst\sigma {r^*})]$ and $w=w_1\dots w_nw_{n+1}$ so
  that  $\emptyset
  \vdash w_1 \in r_1$, \dots, $\emptyset \vdash w_n \in r_n$, and
  $\emptyset \vdash w_{n+1} \in(\ApplySubst\sigma {r^*})$. Proceed as in
  the first subcase for concatenation.

  \textbf{Case }$\mu x.r$. As usual, let $\hat\sigma = \sigma{[\mu
    x.r/x]}$ and $\hat\nu = \nu{[\Null (r)\nu[\False/x]/x]}$. Again,
  $\cderiv[\sigma,\nu]{\mu x.r}a = \cderiv[\hat\sigma,\hat\mu]{r}a
  \PUSH \SINGLETON\Rempty$.
  Hence, $\RS = \RS' \PUSH \SINGLETON\Rempty$ for some $\RS' \in
  \cderiv[\hat\sigma,\hat\mu]{r}{a}$ such that $\emptyset \vdash w \in
  \RS'$. Induction yields that $\hat\sigma \vdash aw \in r$ and
  application of the $\mu$-rule yields $\sigma \vdash aw \in \mu x.r$.

  \textbf{Case }$x$. Then $\cderiv[\hat\sigma,\hat\nu]xa =
  \cderiv[\sigma,\nu]{\mu x.r}a$ if $\hat\sigma = \sigma{[\mu
    x.r/x]}$ and $\hat\nu = \nu{[\Null (r)\nu[\False/x]/x]}$.

  Now $\emptyset \vdash w \in \cderiv[\hat\sigma,\hat\nu]xa$ iff
  exists $\RS \in \cderiv[\hat\sigma,\hat\nu]xa =
  \cderiv[\sigma,\nu]{\mu x.r}a$ such that $\emptyset \vdash w \in
  \RS$. But $\cderiv[\sigma,\nu]{\mu x.r}a =
  \cderiv[\hat\sigma,\hat\nu]{r}a \PUSH \SINGLETON\Rempty$ so that
  $\RS = \RS'\PUSH\SINGLETON\Rempty$ and $\emptyset \vdash w \in \RS'$
  with a smaller derivation tree. Thus, induction yields that
  $\hat\sigma \vdash aw \in r$, application of the $\mu$-rule yields
  $\sigma \vdash aw \in \mu x.r$, and application of the variable rule
  yields $\hat\sigma \vdash aw \in x$, as desired.
  \qed
\end{proof}

\section{Finiteness}

In analogy to Antimirov's finite automaton construction, we aim to use
the set of iterated derivatives as a building block for a pushdown
automaton. In our construction, derivatives end up as pushdown symbols
rather than states: the top of the pushdown plays the role of the state.
It remains to prove that this set is finite to obtain a proper PDA.

Our finiteness argument is based on an analysis of the syntactical
form of the derivatives. It turns out that a derivative is, roughly, a
concatenation of a strictly descending sequence of certain subexpressions of
the initial expression. As this ordering is finite, we obtain a finite
bound on the syntactially possible derivatives.

We start with an analysis of the output of $\pderiv[\sigma,\nu]r\alpha$.
The elements in the stack of a partial derivative are vectors of the form $((h\cdot s_1)\cdot s_2)\cdots s_k$ that 
we abbreviate $h\cdot\vec{s}$, where the $s_i$ are
arbitrary expressions and $h$ is either $\Rempty$ or ${\mu x.r}$
where $\mu x.r$ is closed. 

It turns out that the vectors produced by derivation are always
strictly ascending chains in the subterm ordering of the original
expression, say $t$. We first define this ordering, then we define the
structure of these vectors in Definition~\ref{def:t-sorted-vector}.

\begin{definition}
  Let $r \in \Reg (\Sigma)$ be a closed expression. We define the
  \emph{addressing function} $\ADDRESS_r : \nat^* \partialto \Reg (\Sigma, X)$ by
  induction on $r$.
  \begin{align*}
    \ADDRESS_\Rnull  &= \{ (\varepsilon, \Rnull) \} &
    \ADDRESS_{r+s} &= \{ (\varepsilon, r+s) \} \cup 1.\ADDRESS_r \cup 2.\ADDRESS_s \\
    \ADDRESS_\Rempty &= \{ (\varepsilon, \Rempty) \} &
    \ADDRESS_{r\cdot s} &= \{ (\varepsilon, r\cdot s) \} \cup 1.\ADDRESS_r \cup 2.\ADDRESS_s\\
    \ADDRESS_a &= \{ (\varepsilon,a) \}  & \ADDRESS_{r^*} &= \{ (\varepsilon, r^*) \} \cup 1.\ADDRESS_r\\
    \ADDRESS_{x} &= \{ (\varepsilon, x) \} &
    \ADDRESS_{\mu x.r} &= \{ (\varepsilon, \mu x.r) \} \cup 1.\ADDRESS_r 
  \end{align*}
  Here $i.\ADDRESS$ modifies the function $\ADDRESS$ by prepending $i$ to each
  element of $\ADDRESS$'s domain: 
  \begin{align*}
    (i.\ADDRESS) (w) &=
                     \begin{cases}
                       \ADDRESS (w') & w = i w' \text{ and }\ADDRESS
                       (w')\text{ defined} \\
                       \text{undefined} & \text{otherwise.}
                     \end{cases}
  \end{align*}
\end{definition}
It is well known that $\DOM (\ADDRESS_r)$ is prefix-closed and assigns a
unique $w\in\nat^*$ to each occurrence of a subexpression in $r$.
Let $r_1 = \ADDRESS_r (w_1)$ and $r_2 = \ADDRESS_r (w_2)$ be subexpression
occurrences of $r$. 
We say that \emph{$r_1$ occurs before  $r_2$ in $r$} if $w_1
\OccursBefore w_2$ in the lexicographic order on $\nat^*$:
\begin{mathpar}
  \inferrule{}{\varepsilon \OccursBefore w}

  \inferrule{ i < j }{ i v \OccursBefore j w}

  \inferrule{v \OccursBefore w }{ i v \OccursBefore i w }
\end{mathpar}
We write $w_1 \OccursStrictlyBefore w_2$ if $w_1 \OccursBefore w_2$
and $w_1\ne w_2$, in which case we say that $r_1$ occurs strictly
before $r_2$.
\begin{lemma}\label{lemma:no-infinite-chains}
  For each closed expression $r\in \Reg (\Sigma)$, the strict lexicographic
  ordering $\OccursStrictlyBefore$ on $\DOM (\ADDRESS_r)$ has no
  infinite chains.
\end{lemma}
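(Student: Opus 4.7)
The plan is very short because the statement ultimately reduces to finiteness of $\DOM(\ADDRESS_r)$. Lexicographic order on $\nat^*$ is a total order, so any chain is a sequence of distinct elements of $\DOM(\ADDRESS_r)$; once that domain is finite, no infinite chain can exist. So the core task is to show that $\DOM(\ADDRESS_r)$ is finite for every closed $r \in \Reg(\Sigma)$, and the rest is a one-line observation.

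To show finiteness, I would induct on the structure of $r$. The base cases $r \in \{\Rnull, \Rempty, a, x\}$ are immediate: in each case $\DOM(\ADDRESS_r) = \{\varepsilon\}$, which has cardinality one. For the inductive cases ($r+s$, $r\cdot s$, $r^*$, $\mu x.r$), one reads off the defining equations in the statement of the addressing function. For example, for $r+s$, $\DOM(\ADDRESS_{r+s}) = \{\varepsilon\} \cup 1.\DOM(\ADDRESS_r) \cup 2.\DOM(\ADDRESS_s)$. By the induction hypothesis, both $\DOM(\ADDRESS_r)$ and $\DOM(\ADDRESS_s)$ are finite, and prepending a fixed symbol preserves finiteness; therefore the union is finite as well. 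The other compound cases are identical in structure. A cleaner single bound would be $|\DOM(\ADDRESS_r)|$ equals the number of subexpression occurrences of $r$, which is manifestly finite because $r$ is a finite abstract syntax tree.

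To conclude: let $w_1 \OccursStrictlyBefore w_2 \OccursStrictlyBefore w_3 \OccursStrictlyBefore \cdots$ be a chain in $\DOM(\ADDRESS_r)$. Since $\OccursStrictlyBefore$ is irreflexive and transitive, the $w_i$ are pairwise distinct elements of the finite set $\DOM(\ADDRESS_r)$, so the chain must terminate.

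There is no real obstacle; the one place to be slightly careful is that the lemma speaks about \emph{chains} (not just descending chains), but this is a non-issue because the carrier is finite, which is stronger than needing well-foundedness in either direction. The closedness hypothesis plays no role in the argument and could be dropped, but it fits the rest of the paper's conventions.
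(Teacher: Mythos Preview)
Your proposal is correct and matches the paper's reasoning: the paper does not supply an explicit proof of this lemma, but its later use (in the proof of Lemma~\ref{lemma:finiteness}) makes clear that the intended argument is exactly yours --- $r$ is a finite syntax tree, so $\DOM(\ADDRESS_r)$ is finite, and a strict order on a finite carrier admits no infinite chains. Your observation that closedness is irrelevant here is also accurate.
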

\begin{definition}
  Let $t \in \Reg (\Sigma)$ be a closed expression such that each
  variable occurring in $t$ is bound exactly once. The
  \emph{unfolding substitution $\sigma_t$} is defined by induction on $t$.
  \begin{align*}
    \sigma_\Rnull  & = [] & \sigma_{r+s} & = \sigma_r \cup \sigma_s \\
    \sigma_\Rempty & = [] & \sigma_{r\cdot s} & = \sigma_r \cup \sigma_s\\
    \sigma_a       & = [] & \sigma_{r^*} &= \sigma_r \\
    \sigma_x       & = [] & \sigma_{\mu x.r} &= [\mu x.r / x] \cup \sigma_r
  \end{align*}
\end{definition}
\begin{definition}\label{def:t-sorted-vector}
  A vector $\vec{s} = (s_1\cdot s_2)\cdots s_k$ is \emph{$t$-sorted} if for all
  $1\le i < j \le k$:
    $s_i$ and $s_j$ are subexpressions of $t$ and
    $s_i$ occurs strictly before $s_j$, which means that there are
    $w_1, \dots, w_k \in \nat^*$
    such that $s_i  = {\ADDRESS_t (w_i)}$ and $w_i
    \OccursStrictlyBefore w_{i+1}$, for $1\le i < k$.

    For a $t$-sorted vector $\vec{s} = (s_1\cdot s_2)\cdots s_k$ define two forms of expressions:
    \begin{description}
    \item[top:] $\ApplySubst{\sigma_t}{(\Rempty \cdot \vec{s})}$.
    \item[rec:] $\ApplySubst{\sigma_t}{(( {\mu x.s})\cdot \vec{s})}$
      where $\mu x.s$ is a subexpression of $t$ and either $\mu x.s$ or an occurrence of $x$ is strictly
      before $s_i$, for all $1\le i \le k$.
    \end{description}
    A stack $\RS = [r_1, \dots, r_n]$ (for $n\ge1$) has form
    $\mathbf{top}^+$ if $r_1, \dots, r_n$ have form $\mathbf{top}$.

    A stack $\RS = [r_1, \dots, r_n]$ (for $n\ge1$) has form $\mathbf{rec}.\mathbf{top}^*$ if $r_1$
    has form $\mathbf{rec}$ and $r_2, \dots, r_n$ have form $\mathbf{top}$.
\end{definition}

Next, we show that all derivatives and partial derivatives of
subexpressions of a closed expression $t$ have indeed one of the forms
$\mathbf{top}^+$ or $\mathbf{rec}.\mathbf{top}^*$. 

\begin{lemma}[Classification of derivatives]\label{lemma:classification-of-stack-elements-finally}
  Suppose that $t \in \Reg (\Sigma)$ is a closed expression, $r \in \Reg (\Sigma,X)$ is a subexpression of
  $t$, $\sigma : X \to \Reg (\Sigma,X)$ is order-closed with $\sigma(x) =
  \mu x.s$ (for $x\in X$ and $\mu x.s$ a subterm of $t$), and $\nu : X \to
  \bool$ such that $\nu (x) = \Null (\ApplySubst\sigma x)\emptyset$. 
  If $\RS = [r_1, \dots, r_n]  \in \pderiv[\sigma,\nu]ra$, then $n\ge1$ and $\RS$ 
  has form $\mathbf{top}^+$ and each $r_i = h_i \cdot \vec{s_i}$ for some $t$-sorted $\vec{s_i}$ which is
  before $r$.
\end{lemma}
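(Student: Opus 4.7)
The plan is to proceed by structural induction on the subexpression $r$ of $t$. At each case I verify two things in tandem: (i) every element of $\pderiv[\sigma,\nu]{r}{a}$ is a stack whose entries are all of $\mathbf{top}$-form, and (ii) the accompanying $t$-sorted vectors satisfy the ``before $r$'' invariant, so that the induction can extend them cleanly in the compound cases. Throughout, I use that $\sigma$ agrees with $\sigma_t$ on the free variables actually consulted, which follows from the hypothesis $\sigma(x) = \mu x.s_x$ with $\mu x.s_x$ a subterm of $t$.

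The base cases are immediate. For $\Rnull$, $\Rempty$, and a symbol $b \ne a$, the derivative is empty. For $b = a$, the derivative is the singleton $\{\SINGLETON{\Rempty}\}$, whose unique entry $\Rempty = \ApplySubst{\sigma_t}(\Rempty \cdot \varepsilon)$ is $\mathbf{top}$-form with the empty vector, which is trivially before $b$. The sum case $r + s$ follows directly from the IH on each summand, since a $t$-sorted vector before $r$ (respectively before $s$) is also before $r + s$, given that $r + s$ sits at a strictly earlier address in $\DOM(\ADDRESS_t)$ than either child. The variable case $x$ is vacuous because $\pderiv[\sigma,\nu]{x}{a} = \emptyset$ for $a \in \Sigma$.

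The interesting cases are concatenation, Kleene star, and recursion. For $r \cdot s$, I split according to the union in the definition: the branch $\{\SS \mid \Null(r)\nu, \SS \in \pderiv[\sigma,\nu]{s}{a}\}$ is handled directly by the IH on $s$, while in $\pderiv[\sigma,\nu]{r}{a} \cdot (\ApplySubst\sigma s)$ the IH on $r$ produces a top-form stack $[r_1,\dots,r_n]$ whose bottom entry $r_n = h_n \cdot \vec{s}$ gets extended to $h_n \cdot \vec{s} \cdot (\ApplySubst\sigma s)$; using the agreement of $\sigma$ with $\sigma_t$, this coincides with the $\mathbf{top}$-form element obtained by appending the bare subterm $s$ to the vector. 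The Kleene star case is parallel, with $r^*$ itself appended in place of $s$. For $\mu x.r$, I invoke the IH on $r$ under the updated environments $\hat\sigma = \sigma[\mu x.r/x]$ and $\hat\nu = \nu[\Null(r)\nu[\False/x]/x]$; these updates preserve the hypotheses because $\mu x.r$ is itself a subterm of $t$ and $\hat\nu(x) = \Null(\ApplySubst{\hat\sigma} x)\emptyset$ follows from Lemma~\ref{lemma:nullability-fixpoint}. Pushing $\SINGLETON{\Rempty}$ at the bottom of the resulting stack then adds a further $\mathbf{top}$-form entry with empty vector.

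The main obstacle is the bookkeeping of $t$-sortedness across the concatenation and Kleene star cases: I must show that the newly appended subterm ($s$ in the product, $r^*$ in the star) lies strictly later in the pre-order $\OccursStrictlyBefore$ on $\DOM(\ADDRESS_t)$ than every address occurring in the existing vector $\vec{s}$. This is where the ``vector before $r$'' invariant earns its keep: by IH every address of $\vec{s}$ sits strictly after the address of $r$ in $t$ and in fact within the subtree rooted at $r$, so a right sibling (or the enclosing star) has a strictly later pre-order address, and the extended vector remains $t$-sorted. Propagating the invariant up through the recursion case is then routine, since the freshly added $\Rempty$ at the bottom carries the empty vector.
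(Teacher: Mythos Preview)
Your argument follows the same structural induction on $r$ as the paper's proof, with the same case split and essentially the same reasoning in each case; you even supply more detail than the paper in checking that the updated environments $\hat\sigma,\hat\nu$ in the $\mu$-case still satisfy the lemma's hypotheses, and in making explicit that the running substitution $\sigma$ agrees with $\sigma_t$ on the variables that matter.

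One point to watch is the Kleene-star case. You assert that ``the enclosing star has a strictly later pre-order address'' than the entries already in the vector, but under the paper's order $\OccursBefore$ this is false: if $r^*$ sits at address $w$ then $r$ sits at the address obtained by extending $w$ by $1$, and by your own invariant the vector entries produced from $\pderiv[\sigma,\nu]{r}{a}$ lie inside $r$'s subtree, i.e.\ at addresses properly extending $w$. Since $\varepsilon \OccursBefore u$ for every $u$, we get $w \OccursStrictlyBefore wu$, so the appended $r^*$ has a strictly \emph{earlier} pre-order address, and tacking it on at the tail does not preserve the strictly-increasing condition of Definition~\ref{def:t-sorted-vector}. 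The paper's own proof of this case simply asserts that ``the final concatenation preserves $t$-sortedness'' without further argument, so you are faithfully reproducing its level of justification; your right-sibling reasoning for $r\cdot s$ is correct, but it does not transfer to $r^*$ as written.
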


\begin{lemma}[Classification of spontaneous derivatives]\label{lemma:classification-of-stack-elements-epsilon}
  Suppose that $t \in \Reg (\Sigma)$ is a closed expression, $r \in \Reg (\Sigma,X)$ is a subexpression of
  $t$, $\sigma : X \to \Reg (\Sigma,X)$ is order-closed with $\sigma(x) =
  \mu x.s$ (for $x\in X$ and $\mu x.s$ a subterm of $t$), and $\nu : X \to
  \bool$ such that  $\nu (x) = \Null (\ApplySubst\sigma x)\emptyset$. 
  If $\RS = [r_1, \dots, r_n]  \in \pderiv[\sigma,\nu]r\varepsilon$, then $n\ge1$ and $\RS$ 
  has form $\mathbf{rec}.\mathbf{top}^*$ and each  $r_i = h_i \cdot \vec{s_i}$ for some $t$-sorted $\vec{s_i}$ which is
  before $r$.
\end{lemma}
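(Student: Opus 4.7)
The plan is to carry out an induction on the structure of $r$ in exact parallel with the proof of Lemma~\ref{lemma:classification-of-stack-elements-finally}, swapping the roles of the two non-vacuous base cases. The atomic cases $r \in \{\Rnull, \Rempty\} \cup \Sigma$ are trivial since $\pderiv[\sigma,\nu]{r}{\varepsilon} = \emptyset$. The only productive base case is now $r = x$: by definition $\pderiv[\sigma,\nu]{x}{\varepsilon} = \{\SINGLETON{\ApplySubst\sigma x}\}$, and the hypothesis $\sigma(x) = \mu x.s$ with $\mu x.s$ a subterm of $t$ forces $\ApplySubst\sigma x$ to equal $\ApplySubst{\sigma_t}{(\mu x.s)}$, which exactly matches the definition of a $\mathbf{rec}$-expression with empty tail vector. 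The singleton stack therefore has form $\mathbf{rec}.\mathbf{top}^*$, and the side conditions on the tail vector hold vacuously.

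For the compound cases $r_1 + r_2$, $r_1 \cdot r_2$, $r_1^*$, and $\mu x.r_1$, I combine derivatives of strictly smaller subexpressions using exactly the same stack operations ($\cup$, bottom-concatenation with $\ApplySubst\sigma s$ or $\ApplySubst\sigma{r^*}$, and $\PUSH\SINGLETON\Rempty$) that appear in Lemma~\ref{lemma:classification-of-stack-elements-finally}. By the inductive hypothesis every input stack already has form $\mathbf{rec}.\mathbf{top}^*$, so its top entry remains a $\mathbf{rec}$-element throughout. The bottom-concatenation with a subterm $s$ appends $s$ to a $t$-sorted tail vector that lives inside the subtree rooted at the smaller subexpression; since $s$ sits in a lex-later branch of $r_1 \cdot r_2$ or $r_1^*$, $t$-sortedness is preserved and the bottom element retains its $\mathbf{top}$ (or $\mathbf{rec}$) shape. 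In the $\mu$-case I apply IH to $r_1$ under $\hat\sigma = \sigma[\mu x.r_1/x]$ and $\hat\nu = \nu[\Null(r_1)\nu[\False/x]/x]$, using Lemma~\ref{lemma:nullability-fixpoint} to verify the invariant $\hat\nu(x) = \Null(\ApplySubst{\hat\sigma}{x})\emptyset$, and then append $\SINGLETON\Rempty$ at the bottom, which is a well-formed $\mathbf{top}$-element.

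The main obstacle, as in the companion lemma, is the bookkeeping required to maintain the hypotheses on $\sigma$ and $\nu$ across recursive calls, most delicately in the $\mu$-case where $\hat\sigma$ adds a new mapping and $\hat\nu$ updates $x$ via a fixed point. In practice I would prove this lemma and Lemma~\ref{lemma:classification-of-stack-elements-finally} simultaneously by mutual structural induction on $r$: the $+$, $\cdot$, ${}^*$, and $\mu$ clauses for $\alpha = \varepsilon$ invoke the $\varepsilon$-version of the IH, while the $\cdot$-clause also needs the letter-version (to handle the $\Null(r_1)\nu$ summand), so a single joint induction dispatches all the bookkeeping cleanly and the present lemma comes out as the $\alpha=\varepsilon$ half of that combined statement.
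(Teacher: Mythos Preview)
Your proposal is correct and follows essentially the same approach as the paper: a structural induction on $r$ with the atomic cases void, the variable case producing the single $\mathbf{rec}$-stack $\SINGLETON{\ApplySubst\sigma x}$, and the compound cases handled by the same closure arguments (bottom-concatenation preserves $t$-sortedness because the appended factor lies in a lexicographically later branch, and pushing $\SINGLETON\Rempty$ adds a $\mathbf{top}$ element).

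One small inaccuracy in your final paragraph: the mutual induction you propose is unnecessary. In the $\cdot$-clause with $\alpha=\varepsilon$, the $\Null(r_1)\nu$ summand is $\{\SS \mid \Null(r_1)\nu,\ \SS\in\pderiv[\sigma,\nu]{r_2}{\varepsilon}\}$, which still only invokes the $\varepsilon$-version of the inductive hypothesis, not the letter version. The two lemmas are therefore independent, and the paper indeed proves each by a self-contained induction rather than a joint one.
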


\begin{lemma}[Classification of derivatives of vectors]\label{lemma:closure-form-finally}
  Let $t \in R (\Sigma)$ be a closed expression and $t_0$ be closed of
  form \textup{\textbf{top}} or form \textup{\textbf{rec}} with respect to
  $t$. Then the elements of $\pderiv[\emptyset,\emptyset]{t_0}a$ are stacks  of the form
  $\mathbf{top}^+$ as in 
  Lemma~\ref{lemma:classification-of-stack-elements-finally} and the
  elements of  $\pderiv[\emptyset,\emptyset]{t_0}\varepsilon$ are
  stacks of the form $\mathbf{rec}.\mathbf{top}^*$.
\end{lemma}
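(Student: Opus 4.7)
The plan is to induct on the length $k$ of the $t$-sorted vector $\vec{s} = s_1 \cdots s_k$ underlying $t_0 = \ApplySubst{\sigma_t}(h \cdot \vec{s})$, handling the $\alpha \in \Sigma$ and $\alpha = \varepsilon$ claims in parallel.

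In the base case $k = 0$, we have $t_0 = h$. Either $h = \Rempty$, in which case $\pderiv[\emptyset,\emptyset]{\Rempty}{\alpha}$ is empty and the statement is vacuous, or $h = \mu x.s$ with $\mu x.s$ a closed subterm of $t$. In the latter case, $\mu x.s$ is itself a subexpression of $t$, so Lemmas~\ref{lemma:classification-of-stack-elements-finally} and~\ref{lemma:classification-of-stack-elements-epsilon} apply directly with $r := \mu x.s$, $\sigma := \emptyset$, $\nu := \emptyset$, yielding the required $\mathbf{top}^+$ and $\mathbf{rec}.\mathbf{top}^*$ shapes.

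For the inductive step, I would decompose $t_0 = t_0' \cdot \ApplySubst{\sigma_t}{s_k}$ where $t_0' := \ApplySubst{\sigma_t}(h \cdot s_1 \cdots s_{k-1})$ retains form \textbf{top} or \textbf{rec}. Unfolding the concatenation rule yields
\[
  \pderiv[\emptyset,\emptyset]{t_0}{\alpha} = \pderiv[\emptyset,\emptyset]{t_0'}{\alpha} \cdot \ApplySubst{\sigma_t}{s_k} \;\cup\; \{ \SS \mid \Null(t_0')\emptyset,\ \SS \in \pderiv[\emptyset,\emptyset]{\ApplySubst{\sigma_t}{s_k}}{\alpha} \}.
\]
The first union is handled by the induction hypothesis: each stack in $\pderiv[\emptyset,\emptyset]{t_0'}{\alpha}$ already has the asserted shape, and appending $\ApplySubst{\sigma_t}{s_k}$ to the bottom element only extends its $t$-sorted vector $\vec{u}$ by $s_k$. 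Because $\vec{u}$ ends strictly before $s_{k-1}$ and $s_{k-1} \OccursStrictlyBefore s_k$ by the $t$-sortedness of $\vec{s}$, the extended vector remains $t$-sorted and before $t_0$.

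The delicate obligation is the second union: classifying $\pderiv[\emptyset,\emptyset]{\ApplySubst{\sigma_t}{s_k}}{\alpha}$, where $\ApplySubst{\sigma_t}{s_k}$ is closed but is not itself a subexpression of $t$. I would prove, as a side lemma by structural induction on a subexpression $r$ of $t$ with $\FV(r) \subseteq \DOM(\sigma_t)$, that $\pderiv[\emptyset,\emptyset]{\ApplySubst{\sigma_t}{r}}{\alpha}$ produces stacks of the same shapes that Lemmas~\ref{lemma:classification-of-stack-elements-finally} and~\ref{lemma:classification-of-stack-elements-epsilon} assert for $\pderiv[\sigma_t,\nu_t]{r}{\alpha}$. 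The critical case of this side induction is $r = x$: then $\ApplySubst{\sigma_t}{x} = \mu x.s_x$ is a closed subexpression of $t$, and the $\mu$-rule reduces its partial derivative to $\pderiv[\sigma',\nu']{s_x}{\alpha} \PUSH \SINGLETON{\Rempty}$ with the order-closed $\sigma' = [\mu x.s_x/x]$, putting us in the scope of the classification lemmas applied to the subexpression $s_x$. The main obstacle lies precisely here: showing that the substitutions and nullability environments built up incrementally by the $\mu$-rule during the derivation of $\ApplySubst{\sigma_t}{r}$ correspond exactly to the order-closed data required by the referenced lemmas. This holds because $\sigma_t$ packages precisely the closed $\mu$-subterms of $t$, so pre-substitution and incremental accumulation produce the same environments modulo the trailing $\SINGLETON{\Rempty}$, which is itself a form-$\mathbf{top}$ element with an empty vector.
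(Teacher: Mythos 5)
Correct, and essentially the paper's own proof: the paper likewise reduces $t_0$ via the concatenation rule to derivatives of the closed instances $\ApplySubst{\sigma_t}{s_j}$ and then classifies those by an inner structural induction on $s_j$ --- exactly your proposed side lemma, with the same crucial variable case where $\ApplySubst{\sigma_t}{x}=\mu x.s_x$ and the $\mu$-rule's incrementally accumulated $\sigma,\nu$ coincide with the environments required by Lemmas~\ref{lemma:classification-of-stack-elements-finally} and~\ref{lemma:classification-of-stack-elements-epsilon}. Your deviations are cosmetic (induction on the vector length $k$ with the binary concatenation rule, versus the paper's one-shot unfolding of the whole vector into the pieces indexed by nullable prefixes), apart from one small bookkeeping slip: the invariant to carry is that the bottom element's vector consists of instances lying before $s_k$ --- it may well end with (an instance of) $s_{k-1}$ itself, appended by the concatenation rule, rather than ``strictly before $s_{k-1}$''.
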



We define the set of iterated partial derivatives as the expressions that
may show up in the stack of a partial derivative. This set will serve
as the basis for defining the set of pushdown symbols of a PDA.

\begin{definition}[Iterated Partial Derivatives]
  Let $t \in \Reg (\Sigma)$ be a closed expression.
  Define $\IPD(t)$, the set of iterated partial derivatives of $t$, as the smallest set such that
  \begin{itemize}
  \item $\Rempty \cdot t \in \IPD (t)$;
  \item if $r \in \IPD (t)$ and  $[t_1,\dots, t_n] \in \pderiv[\emptyset,\emptyset]{r}a$,
    then $t_j \in \IPD(t)$, for all $1\le j\le n$; and
  \item if $r \in \IPD (t)$ and  $[t_1,\dots, t_n] \in \pderiv[\emptyset,\emptyset]{r}\varepsilon$,
    then $t_j \in \IPD(t)$, for all $1\le j\le n$.
  \end{itemize}
\end{definition}
\begin{lemma}[Closure]
  Let $t \in \Reg (\Sigma)$ be a closed expression.
  Then all elements of $\IPD(t)$ either have form \textup{\textbf{top}} or \textup{\textbf{rec}} with respect to $t$. 
\end{lemma}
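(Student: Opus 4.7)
The plan is to prove the Closure Lemma by straightforward rule induction on the inductive definition of $\IPD(t)$, with Lemma~\ref{lemma:closure-form-finally} doing the real work in the inductive step.

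For the base case $r = \Rempty \cdot t$, I would exhibit the trivial singleton $t$-sorted vector $\vec{s} = (t)$: the subterm $t$ occurs in $t$ at address $\varepsilon$, and strict lexicographic ordering imposes no condition on a length-one vector. Since $t$ is closed, its free variables are empty and the unfolding substitution is a no-op at the top, so $\ApplySubst{\sigma_t}{(\Rempty \cdot t)} = \Rempty \cdot t$, witnessing that $r$ has form $\mathbf{top}$.

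For the inductive step, assume $r \in \IPD(t)$ already has form $\mathbf{top}$ or $\mathbf{rec}$ with respect to $t$. Both forms arise by applying the unfolding substitution $\sigma_t$, which closes every subexpression of $t$ (since $t$ itself is closed and $\sigma_t$ provides a binder for every bound variable of $t$); hence $r$ is closed, which is precisely the hypothesis needed to invoke Lemma~\ref{lemma:closure-form-finally}. If $[t_1,\dots,t_n] \in \pderiv[\emptyset,\emptyset]{r}{a}$ for some $a \in \Sigma$, that lemma classifies the stack as $\mathbf{top}^+$, so each $t_j$ has form $\mathbf{top}$. If instead the stack arises from $\pderiv[\emptyset,\emptyset]{r}{\varepsilon}$, the lemma classifies the stack as $\mathbf{rec}.\mathbf{top}^*$, so $t_1$ has form $\mathbf{rec}$ and $t_2,\dots,t_n$ have form $\mathbf{top}$. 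In either case, every element added to $\IPD(t)$ in this step is of form $\mathbf{top}$ or $\mathbf{rec}$, closing the induction.

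There is no real obstacle once Lemma~\ref{lemma:closure-form-finally} is in hand: the only point that needs care is maintaining closedness of the expressions on which we take derivatives, and this is transparent because form $\mathbf{top}$ and form $\mathbf{rec}$ are defined as the image of $\ApplySubst{\sigma_t}{\cdot}$, which removes all free variables relative to the binders of $t$.
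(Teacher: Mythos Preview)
Your proposal is correct and follows essentially the same approach as the paper: the paper's proof is a one-line ``Follows from Lemmas~\ref{lemma:classification-of-stack-elements-finally}, \ref{lemma:classification-of-stack-elements-epsilon}, and~\ref{lemma:closure-form-finally},'' and you have simply spelled out the underlying rule induction on $\IPD(t)$ explicitly, with Lemma~\ref{lemma:closure-form-finally} carrying the inductive step. Your observation that Lemma~\ref{lemma:closure-form-finally} alone suffices for the induction (the other two lemmas being ingredients of its proof rather than separately needed here) is accurate, and your treatment of the base case and of closedness is fine.
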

\begin{proof}
  Follows from
  Lemmas~\ref{lemma:classification-of-stack-elements-finally},
  \ref{lemma:classification-of-stack-elements-epsilon},
  and~\ref{lemma:closure-form-finally}. 
\end{proof}

\begin{lemma}[Finiteness]\label{lemma:finiteness}
  Let $t \in \Reg (\Sigma)$ be closed.
  Then $\IPD (t)$ is finite.
\end{lemma}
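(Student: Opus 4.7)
The plan is to reduce finiteness of $\IPD(t)$ to the finiteness of the underlying syntactic material, using the preceding Closure lemma to confine attention to expressions of form \textbf{top} or \textbf{rec}. Concretely, I would show that there are only finitely many candidate pairs $(h,\vec{s})$ giving rise to such forms, and then observe that the unfolding substitution $\sigma_t$ is fixed, so the set of expressions $\ApplySubst{\sigma_t}{(h\cdot\vec{s})}$ obtained from those candidates is finite.

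First I would record that $\DOM(\ADDRESS_t)$ is finite (a closed $\mu$-regular expression has finitely many occurrence positions), and consequently the set of subexpressions of $t$ is finite. Since $\OccursStrictlyBefore$ is a strict order on this finite set, every strictly ascending chain in $\OccursStrictlyBefore$ has length at most $|\DOM(\ADDRESS_t)|$ (this is also a direct consequence of Lemma~\ref{lemma:no-infinite-chains} combined with finiteness). Hence the collection of all $t$-sorted vectors $\vec{s}$ is finite: bounded in length and with each entry drawn from a finite set.

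Next I would bound the head $h$. For form \textbf{top} we have $h=\Rempty$; for form \textbf{rec} the head is some $\mu x.s$ that appears as a subexpression of $t$, and there are finitely many such subexpressions. So the set of admissible heads is finite, and therefore the set of pairs $(h,\vec{s})$ giving a well-formed \textbf{top} or \textbf{rec} expression is finite (the additional constraint in form \textbf{rec} — that $\mu x.s$ or an occurrence of $x$ sits strictly before every $s_i$ — only further restricts this finite set). Since $\sigma_t$ is determined by $t$, the map $(h,\vec{s})\mapsto\ApplySubst{\sigma_t}{(h\cdot\vec{s})}$ has finite domain and hence finite image.

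Finally, by the Closure lemma every element of $\IPD(t)$ is of the form \textbf{top} or \textbf{rec} with respect to $t$, so $\IPD(t)$ is contained in the finite image identified above and is therefore finite. I do not anticipate a real obstacle here: the heavy lifting was done in Lemmas~\ref{lemma:classification-of-stack-elements-finally}, \ref{lemma:classification-of-stack-elements-epsilon}, and~\ref{lemma:closure-form-finally}, which constrain the syntactic shape of iterated derivatives; the present argument is just a counting step. The only mild care required is to note that the length bound on $t$-sorted vectors is genuinely finite — this uses both finiteness of $\DOM(\ADDRESS_t)$ and the absence of infinite strictly ascending chains from Lemma~\ref{lemma:no-infinite-chains}.
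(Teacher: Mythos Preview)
Your proposal is correct and follows essentially the same route as the paper: invoke the Closure lemma to confine $\IPD(t)$ to expressions of form \textbf{top} or \textbf{rec}, then bound the finitely many heads $h$ (either $\Rempty$ or a $\mu$-subexpression of $t$) and the finitely many $t$-sorted vectors $\vec{s}$ via Lemma~\ref{lemma:no-infinite-chains} and finiteness of $\DOM(\ADDRESS_t)$. The paper's argument is slightly terser but otherwise identical in structure.
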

\begin{proof}
  By construction, the elements of $\IPD (t)$ are all closed and have
  either form \textbf{top} or form~\textbf{rec}, which is a vector of
  the form $\ApplySubst{\sigma_t}{( h\cdot \vec{s})}$ where $\vec{s}$
  is $t$-sorted. As $t$ is a finite expression and a $t$-sorted vector
  is strictly decreasing, there are only finitely many candidates for
  $\vec{s}$ (by Lemma~\ref{lemma:no-infinite-chains}).

  The head $h$ of the vector is either $\Rempty$ or it is a subexpression of
  $t$ of the form $\mu x.s_x$. Hence, there are only finitely many
  choices for $h$. 

  Thus $\IPD (t)$ is a subset of a finite set and hence finite.
  \qed
\end{proof}
\section{Automaton construction}
\label{sec:automaton-construction}

Given that the derivative for a closed $\mu$-regular expression gives rise to
a finite set of iterated partial derivatives, we use that set as the
pushdown alphabet to construct a nondeterministic pushdown automaton
that recognizes the same language. This construction is
straightforward as its transition function corresponds exactly to the
derivative and the spontaneous derivative function.

\begin{definition}\label{def:unguarded-pda}
  Suppose that $t \in \Reg (\Sigma)$ is closed.
  Define the PDA $\UA (t) = (Q, \Sigma, \Gamma, \delta, q_0, Z_0)$ by
  a singleton set $Q = \{ q \}$, $\Gamma = \IPD (t)$, $q_0 = q$, $Z_0
  = \Rempty\cdot r$, and $\delta \subseteq Q \times
  (\Sigma\cup\{\varepsilon\})\times     \Gamma \times Q \times \Gamma^*$ as the smallest relation such
  that 
  \begin{itemize}
  \item $(q, a, s, q, \SS) \in \delta$ if $\SS \in \pderiv[\emptyset,\emptyset]{s}{a}$, 
    for all $s \in \Gamma$, $\SS\in\Gamma^*$, $a\in\Sigma$;
  \item $(q, \varepsilon, s, q, {\SS}) \in
    \delta$ if $
    \SS \in \pderiv[\emptyset,\emptyset]{s}{\varepsilon}
    $,  
    for all $s \in \Gamma$, $\SS\in\Gamma^*$;
  \item $(q, \varepsilon, s, q, \varepsilon)$, for all $s\in \Gamma$ with $\Null (s)\emptyset$.
  \end{itemize}
\end{definition}

\begin{theorem}[Automaton correctness]\label{lemma:automaton-correctness}
  For all closed expressions $t\in R (\Sigma)$, $ \L (t) = \L(\UA (t))$.
\end{theorem}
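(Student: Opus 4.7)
The plan is to establish $\L(t) = \L(\UA(t))$ via a central invariant relating PDA configurations to the inductive membership judgment of Figure~\ref{fig:inductive-membership-mu-regular}. Specifically, I will prove that for every $w \in \Sigma^*$ and every stack $\RS \in \Gamma^* = \IPD(t)^*$,
\begin{align*}
  (q, w, \RS) \vdash^* (q, \varepsilon, \varepsilon)
  \quad\Longleftrightarrow\quad
  \emptyset \vdash w \in \RS.
\end{align*}
Instantiating this at $\RS = [\Rempty \cdot t] = Z_0$ and observing that $\emptyset \vdash w \in [\Rempty \cdot t]$ is equivalent to $\emptyset \vdash w \in t$ (by splitting the concatenation and noting $\emptyset \vdash \varepsilon \in \Rempty$), Lemma~\ref{lemma:inductive-vs-fixpoint} then yields $\L (\UA (t)) = \L(t)$.

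The forward direction ($\Rightarrow$) goes by induction on the length of the accepting PDA computation. The base case is $\RS = \varepsilon$, $w = \varepsilon$, which matches the empty-stack axiom. For the inductive step I case-split on the first transition. A pop step $(q, w, Z\gamma) \vdash (q, w, \gamma)$ requires $\Null(Z)\emptyset$; by Lemma~\ref{lemma:correctness-of-null} this yields $\emptyset \vdash \varepsilon \in Z$, which together with the IH $\emptyset \vdash w \in \gamma$ gives $\emptyset \vdash w \in Z\gamma$. An $a$-step $(q, aw', Z\gamma) \vdash (q, w', \gamma'\gamma)$ has $\gamma' \in \pderiv[\emptyset,\emptyset]{Z}{a}$; splitting the IH as $\emptyset \vdash w' = w_1 w_2$ with $\emptyset \vdash w_1 \in \gamma'$ and $\emptyset \vdash w_2 \in \gamma$, the inclusion $\pderiv[\emptyset,\emptyset]{Z}{a} \PUSH [\,] \subseteq \cderiv[\emptyset,\emptyset]{\SINGLETON Z}{a}$ combined with the $\Leftarrow$-direction of Theorem~\ref{lemma:derivation-correct} yields $\emptyset \vdash a w_1 \in Z$, hence $\emptyset \vdash aw' \in Z\gamma$. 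A non-pop $\varepsilon$-step uses Lemma~\ref{lemma:derivation-unfolding} in the same manner.

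The backward direction ($\Leftarrow$) is the main obstacle, since a single inductive derivation of $\emptyset \vdash w \in Z\gamma$ may correspond to arbitrarily many PDA $\varepsilon$-steps. I proceed by a well-founded induction on $(|w|, \text{size of the derivation})$. Unfolding gives $w = v w'$ with $\emptyset \vdash v \in Z$ and $\emptyset \vdash w' \in \gamma$. If $v = \varepsilon$, then $\Null(Z)\emptyset$ holds by Lemma~\ref{lemma:correctness-of-null}, so one pop step reduces to the IH on $(w', \gamma)$. If $v = a v_1$, the $\Rightarrow$-direction of Theorem~\ref{lemma:derivation-correct} yields some $\SS \in \cderiv[\emptyset,\emptyset]{\SINGLETON Z}{a}$ with $\emptyset \vdash v_1 \in \SS$. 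I then perform an inner induction on the derivation of $\SS \in \cderiv[\emptyset,\emptyset]{\SINGLETON Z}{a}$: rule (i) $\SS \in \pderiv[\emptyset,\emptyset]{Z}{a}$ is simulated by a single $a$-transition to $(q, v_1 w', \SS\gamma)$, and rule (ii) $\SS \in \cderiv[\emptyset,\emptyset]{\SS'}{a}$ for some $\SS' \in \pderiv[\emptyset,\emptyset]{Z}{\varepsilon}$ is simulated by one $\varepsilon$-transition replacing $Z$ by $\SS'$, followed by the inner IH. In either case we arrive at $(q, v_1 w', \SS\gamma)$, to which the outer IH applies since $|v_1 w'| < |w|$.

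Throughout, the fact that all stacks reached lie in $\Gamma^* = \IPD(t)^*$ is ensured by the Closure Lemma and the Classification Lemmas~\ref{lemma:classification-of-stack-elements-finally} and~\ref{lemma:classification-of-stack-elements-epsilon}, so the simulation stays within the PDA. The principal difficulty is organising the induction in the backward direction so that the inner induction on the $\cderiv$-derivation correctly threads through the PDA's interleaved $\varepsilon$- and $a$-transitions; once the right measure is chosen this reduces to routine case analysis mirroring Theorem~\ref{lemma:derivation-correct}.
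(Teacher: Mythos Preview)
Your proposal is correct and mirrors the paper's own proof: both prove the same generalized invariant $\emptyset \vdash w \in \RS \Leftrightarrow (q,w,\RS)\vdash^*(q,\varepsilon,\varepsilon)$, handle the run-to-membership direction by induction on run length, and handle the membership-to-run direction via Theorem~\ref{lemma:derivation-correct} together with a subsidiary induction on the inductive definition of $\cderiv$. The only cosmetic differences are that you use an explicit lexicographic measure $(|w|,\text{derivation size})$ where the paper uses plain structural induction, and that your inner induction should be stated for $\cderiv{\overline Z}{a}$ with an arbitrary non-empty stack $\overline Z$ (not just $\SINGLETON Z$), so that the inductive hypothesis applies after an $\varepsilon$-transition replaces $Z$ by a possibly multi-element $\SS'$.
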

\begin{proof}
  Let $\UA (t) = (Q, \Sigma, \Gamma, \delta, q_0, Z_0)$.
  We prove a generalized statement from which the original statement
  follows trivially: for all $\RS \in \IPD (t)^*$, $\emptyset \vdash w \in
  \RS$ iff $(q, \RS, w) \StepsTo^* (q, \varepsilon, \varepsilon)$.
  The proof in the left-to-right direction is by induction on the derivation of $\emptyset \vdash w
  \in \RS$.

  \textbf{Case }$\emptyset \vdash \varepsilon \in
  \SINGLETON{}$. Immediate.

  \textbf{Case }$\emptyset \vdash w \in  \RS$ because
  $w=w_1w_2$, $\RS = \SINGLETON r\PUSH\RS'$, $\emptyset \vdash w_1 \in
  r$, and $\emptyset \vdash w_2 \in \RS'$.
  By induction, we find that $(q, \SINGLETON r, w_1) \vdash^* (q,
  \SINGLETON{}, \varepsilon)$. By a standard argument that means  $(q, \SINGLETON r\PUSH\RS', w_1w_2) \vdash^* (q,
  \RS', w_2)$. By the second inductive hypothesis, we find that $ (q,
  \RS', w_2) \vdash^* (q, \SINGLETON{}, \varepsilon)$. Taken together,
  we obtain the desired result.

  Now we consider the derivation of $\emptyset \vdash w \in r$ by
  performing a case analysis on $w$ and using Lemma~\ref{lemma:derivation-correct}.

  \textbf{Case }$\varepsilon$. In this case,
  $\emptyset \vdash \varepsilon \in r$ iff $\Null
  (r_i)\emptyset$ iff $(q, \varepsilon, r, q, \varepsilon)
  \in \delta$ so that  $(q, \SINGLETON{r}, \varepsilon) \StepsTo^+ (q, \varepsilon,
  \varepsilon)$.

  \textbf{Case }$aw$.
  In this case $\emptyset \vdash aw \in r$.
  By Lemma~\ref{lemma:derivation-correct}, $\emptyset \vdash aw \in
  r$ is equivalent to $\emptyset \vdash w \in
  \cderiv[\emptyset,\emptyset]{\SINGLETON{r}} a$ and we perform a
  subsidiary induction on its definition.
  That is, either 
  $\exists\SS\in\pderiv[\emptyset,\emptyset]{r} a$ such that
  $\emptyset \vdash w \in \SS$. In that case, $\UA (t)$ has a
  transition  $(q, r, aw) \StepsTo (q, \SS,
  w)$ by definition of $\delta$. By induction we know that $(q, \SS,
  w) \StepsTo^+ (q, \varepsilon, \varepsilon)$. 

  Alternatively, $\exists\SS\in\pderiv[\emptyset,\emptyset]{r}
  \varepsilon$ such that $\emptyset \vdash aw \in \SS$. In this
  case, $(q, r, aw) \vdash (q, \SS, aw)$ is a transition and by
  induction we have $(q, \SS, aw) \vdash^+ (q, \varepsilon,
  \varepsilon)$.

  \textbf{Right-to-left direction}. By induction on the
  length of $(q, \RS, w) \vdash^*   (q, \SINGLETON{}, \varepsilon)$.

  \textbf{Case }length $0$: it must be $\RS=\SINGLETON{}$ and
  $w=\varepsilon$. Obviously, $\emptyset \vdash \varepsilon \in
  \SINGLETON{}$.

  \textbf{Case }length $>0$: Thus the first configuration must have
  the form $(q, \SINGLETON{s}\PUSH\RS, w)$. There are three
  possibilities.

  \textbf{Subcase }$(q, \SINGLETON{s}\PUSH\RS, w) \vdash (q, \SS\PUSH
  \RS, w')$ if $w=aw'$ and $\SS \in \pderiv s a$. We split the run of
  the automaton at the point where $\SS$ is first consumed: let $w' = w_1w_2$
  such that $(q, \SS\PUSH\RS, w_1w_2) \vdash^* (q, \RS, w_2) \vdash^*
  (q, \SINGLETON{}, \varepsilon)$.
  Hence, there is also a shorter run on $w_1$: $(q, \SS, w_1) \vdash^* (q,
  \SINGLETON{}, \varepsilon)$. Induction yields $\emptyset \vdash w_1
  \in \SS$. By Lemma~\ref{lemma:derivation-correct}, we also have a
  derivation $\emptyset \vdash aw_1 \in s$. By induction on the $\RS$
  run, we obtain $\emptyset \vdash w_2 \in \RS$ and applying the stack
  rule yields $\emptyset \vdash aw_1w_2 \in \SINGLETON{s}\PUSH\RS$ or
  in other words $\emptyset \vdash w \in \SINGLETON{s}\PUSH\RS$.

  \textbf{Subcase }$(q, \SINGLETON{s}\PUSH\RS, w) \vdash (q, \SS\PUSH
  \RS, w)$ if $\SS \in \pderiv s \varepsilon$. We split the run of the
  automaton at the point where $\SS$ is first consumed: let $w' =
  w_1w_2$ such that  $(q, \SS\PUSH\RS, w_1w_2) \vdash^* (q, \RS, w_2) \vdash^*
  (q, \SINGLETON{}, \varepsilon)$. Hence there is also a shorter run
  on $w_1$: $(q, \SS, w_1) \vdash^* (q, \SINGLETON{},
  \varepsilon)$. By induction, we have a
  derivation $\emptyset \vdash w_1 \in \SS$, which yields $\emptyset
  \vdash w_1\in s$ by Lemma~\ref{lemma:derivation-unfolding}, and a derivation
  $\emptyset \vdash w_2 \in \RS$, which we can combine to $\emptyset \vdash
  w_1w_2 \in \SINGLETON{s}\PUSH \RS$ as desired.

  \textbf{Subcase }$(q, \SINGLETON{s}\PUSH\RS, w) \vdash (q, 
  \RS, w)$ if $\Null (s)\emptyset$. By induction, $\emptyset \vdash w
  \in \RS$. As $\Null (s)\emptyset$, it must be that $\emptyset \vdash
  \varepsilon \in s$. Hence, $\emptyset \vdash w\in \SINGLETON{s}\PUSH
  \RS$.
  \qed
\end{proof}

If all recursion operators in an expression $t$ are guarded, in the
sense that they consume some input before entering a recursive call,
then all $\varepsilon$-transitions in the constructed automaton pop
the stack. In fact, when restricting to guarded expressions, the
spontaneous derivative function is not needed at all, which explains
the simplicity of the derivative in the work of Winter and coworkers \cite{DBLP:conf/calco/WinterBR11}.

\subsubsection*{Acknowledgments.} The thoughtful comments of the
anonymous reviewers helped improve the presentation of this paper.
\clearpage

\bibliography{main}
\clearpage
\appendix{}
\section{APPENDIX WITH SELECTED PROOFS}
\begin{proof}[of Lemma~\ref{lemma:lang-is-monotone}]
  Let $L\subseteq L'$, $\eta_1 = \eta[x \mapsto L]$,  $\eta_2 = \eta[x \mapsto L']$, and proceed by induction on $r$.

  \textbf{Cases }$\Rnull$, $\Rempty$, $a$: immediate.

  \textbf{Cases }$r\cdot s$, $r + s$, $r^*$: Immediate by induction.

  \textbf{Case }$x$: Immediate because $\eta_1 (x) \subseteq \eta_2
  (x)$ by assumption.

  \textbf{Case }$y\ne x$: Immediate because $\eta_1 (y) = \eta_2 (y)$.

  \textbf{Case }$\mu y.r$: Requires an auxiliary fixed point induction to
  prove containment.
  \begin{align*}
    \Lang (\mu y.r, {\eta_1})
    &= \LFP\ L. \Lang (r, {\eta_1[y \mapsto L]}) 
    &&\subseteq \LFP\ L. \Lang (r, { \eta_2[y \mapsto L]})
    &&= \Lang (\mu y.r, {\eta_2})
  \end{align*}
\end{proof}
\begin{proof}[of Lemma~\ref{lemma:epsilon-in-empty-l}]
  The proof is by induction on $r$.

  \textbf{Case }$\Rnull$: obvious.

  \textbf{Case }$\Rempty$:  immediate.

  \textbf{Case }$a\in \Sigma$: obvious.

  \textbf{Case }$r+s$: immediate by induction.

  \textbf{Case }$r \cdot s$: if $\varepsilon \notin\Lang (r \cdot s, { \eta[x \mapsto
  \emptyset]})$, then  $\varepsilon \notin\Lang (r, { \eta[x \mapsto
  \emptyset]})$ or  $\varepsilon \notin\Lang (s, { \eta[x \mapsto
  \emptyset]})$. In either case, the result is immediate by the
  inductive hypothesis on $r$ or $s$, respectively.

  \textbf{Case }$r^*$: contradicts assumption.

  \textbf{Case }$x$: by the assumption on $L$.

  \textbf{Case }$y \ne x$: immediate.

  \textbf{Case }$\mu x.r$: immediate because $\Lang (\mu x.r, {\eta})$
  is independent of $\eta(x)$.

  \textbf{Case }$\mu y.r$ for $y\ne x$: we need to show that
  ``$\varepsilon \notin\Lang (\mu y.r, { \eta[x \mapsto
  \emptyset]}) $ implies that, for all $L$,  $\varepsilon \notin L$ implies
  $\varepsilon \notin \Lang (\mu y. r, { \eta[x \mapsto
  L]})$.''

  By definition of $\Lang$, this statement is equivalent to
  ``$\varepsilon \notin\LFP\ Y.\Lang (r, { \eta[x \mapsto
  \emptyset][y \mapsto Y]}) $ implies that, for all $L$,  $\varepsilon \notin L$ implies
  $\varepsilon \notin \LFP\ Y.\Lang ( r, { \eta[x \mapsto
  L][y \mapsto Y]})$.''

  By unrolling the fixed point, we obtain a further equivalent statment
  ``$\varepsilon \notin \Lang (r, { \eta[x \mapsto
  \emptyset][y \mapsto \LFP\ Y.\Lang (r, \eta[x \mapsto
  \emptyset])]}) $ implies that, for all $L$,  $\varepsilon \notin L$ implies
  $\varepsilon \notin \Lang (r, {\eta[x \mapsto
  L][y \mapsto \LFP\ Y.\Lang (r, \eta[x \mapsto
  L])]}) $.''

  By this statement holds by induction with
  $\eta [y \mapsto \LFP\ Y.\Lang (r, { \eta[x \mapsto  L]})]$
  substituted for $\eta$.
  \qed
\end{proof}
\begin{proof}[of Lemma~\ref{lemma:nullability-fixpoint}]
  By Tarksi's theorem, we can rewrite the fixed point using an auxiliary
  function $F_\Null (b) = \Null (r) \nu[x \mapsto b]$:
  \begin{align*}
    \LFP\ b.\Null (r)\nu[x \mapsto b]
    &= \bigvee_{i \in \nat} F_\Null^{(i)} (\False)
    \\
    &= \False \vee F_\Null (\False) \vee
    \bigvee_{i\in\nat}F_\Null^{(i+2)} (\False)
    \\
    &= F_\Null (\False) \vee
    \bigvee_{i\in\nat}F_\Null^{(i)} (F_\Null (\False))
  \end{align*}
  There are two cases. If $F_\Null (\False) = \False$, then
  $F_\Null^{(i)} (F_\Null (\False)) = F_\Null^{(i)} (\False) = \False$
  for all $i\in \nat$. Hence, $\bigvee_{i \in \nat} F_\Null^{(i)}
  (\False) = \False$.

  If $F_\Null (\False) = \True$, then
  $F_\Null^{(i)} (F_\Null (\False)) = F_\Null^{(i)} (\True) = \True$
  for all $i\in \nat$ because $F_\Null$ is a monotone function.
  Hence, $\bigvee_{i \in \nat} F_\Null^{(i)}
  (\False) = \True$.
  \qed
\end{proof}
\begin{proof}[of Lemma~\ref{lemma:correctness-of-null}]
  We proceed by induction on $r$.

  \textbf{Case }$\Rnull$, $\Rempty$, $a\in\Sigma$: obvious.

  \textbf{Case }$r + s$, $r \cdot s$: Immediate by induction.

  \textbf{Case }$r^*$: obvious.
  
  \textbf{Case }$x$: immediate by assumption on $\eta$ and $\nu$.

  \textbf{Case }$\mu x.r$:
  Suppose that $\varepsilon \in \Lang (\mu  x.r, {\eta})$.
  By definition and fixed point unrolling, we have
  \begin{align*}
    \Lang (\mu  x.r, {\eta})
    &= \LFP\ L. \Lang (r, { \eta[x \mapsto L]})
    &&= \Lang (r, { \eta[x \mapsto \LFP\ L. \Lang (r, \eta[x \mapsto L])]})
  \end{align*}
  Now we can argue as follows
  \begin{align*}
    & \varepsilon \in \Lang (r, { \eta[x \mapsto \LFP\ L. \Lang (r
    \eta[x \mapsto L])]})
    \\
    \Leftrightarrow & \text{ by Lemma~\ref{lemma:epsilon-in-empty-l}}
    \\
    & \varepsilon \in \Lang (r, { \eta[x \mapsto \emptyset]})
    \\
    \Leftrightarrow & \text{ by induction because $\eta[x \mapsto  \emptyset] \models \nu[x \mapsto \False]$}
    \\
    & \Null (r)\nu[x \mapsto\False]
    \\
    \Leftrightarrow & \text{ by
      Lemma~\ref{lemma:nullability-fixpoint}}
    \\
    & \LFP\ b.\Null (r)\nu[x\mapsto b] = \Null (\mu x.r)\nu
  \end{align*}
  \qed
\end{proof}

\clearpage{}
\begin{proof}[of Lemma~\ref{lemma:inductive-vs-fixpoint}]
    For all $r\in \Reg (\Sigma)$, $\emptyset \vdash w \in r$ iff $w
    \in \L (r)$.

    \textbf{Left to right}: Perform an induction on $\emptyset \vdash
    w \in r$.

    We need to generalize to, for all $r \in \Reg (\Sigma, X)$ and
    order-respecting $\sigma : X \to \Reg (\Sigma, X)$,
    $\sigma \vdash w \in r$ implies $w \in \L[{\eta}] (r)$ where $\eta (x)
    = \L[{ (\eta\setminus x)}] (\sigma (x))$. The right hand side is
    well-defined because $\sigma$ is order-respecting.

    \textbf{Case }$\sigma \vdash \varepsilon \in \Rempty$: clearly
    $\varepsilon \in \{\varepsilon\} = \L[{\eta}] (\Rempty)$ for any
    $\eta$.

    \textbf{Case }$\sigma \vdash a \in a$: clearly $a \in \{ a \} = \L[{\eta}]
    (a)$ for any $\eta$.

    \textbf{Case }$\sigma \vdash w \in r+s$ because $\sigma \vdash w
    \in r$. By induction $w \in \L[{ \eta}] (r) \subseteq L (r) \eta \cup
    \L[{\eta}] (s) = \L (r+s, { \eta })$.

    \textbf{Case }$\sigma \vdash w \in r+s$ because $\sigma \vdash w
    \in s$. Analogous to previous case.

    \textbf{Case }$\sigma \vdash vw \in r\cdot  s$ because $\sigma
    \vdash v \in r$ and $\sigma \vdash w \in s$. By induction $v \in
    \L[{\eta}] (r)$ and $w\in \L[{\eta}] (s)$, hence $vw \in \L[{\eta}] (r) \cdot
    \L[{\eta}] (s) = \L[{\eta}] (r \cdot s)$.

    \textbf{Case }$\sigma \vdash w \in \mu x.r$ because $\sigma[\mu
    x.r/x] \vdash  w \in r$. By induction $w \in \L (r, { \eta[\L (\mu
    x.r, \eta)/x]}) = \L (\mu x.r, {\eta})$. 

    \textbf{Case }$\sigma[\mu    x.r/x] \vdash w \in x$ because
    $\sigma \vdash w \in \mu x .r$. By induction, we obtain that $w
    \in \L[{ \eta}] (\mu x.r) = \L (x, {\eta[\L (\mu x.r, \eta)/x]})$ which
    fits with $\sigma[\mu x.r/x]$.

    \textbf{Right to left}. Perform an induction on $r$ to show that
    for all $r \in \Reg (\Sigma, X)$ and order-respecting $\sigma : X
    \to \Reg (\Sigma, X)$ such that $\eta (X) \subseteq \L[{ (\eta \setminus x)}] (\sigma
    (x))$, 
    $w \in \L[{ \eta}] (r)$ implies $\sigma \vdash w \in r$.

    \textbf{Case }$\Rnull$. Void.

    \textbf{Case }$\Rempty$. $w\in \L[{\eta}] (\Rempty)$ implies
    $w=\varepsilon$ and $\sigma \vdash \varepsilon \in \Rempty$ for
    all $\sigma$.

    \textbf{Case }$a$. $w \in \L[{\eta}] (a)$ implies $w=a$ and $\sigma
    \vdash a \in a$ for all $\sigma$.

    \textbf{Case }$r+s$. $w \in \L[{\eta}] (r+s) $ implies $w \in \L[{\eta}]
    (r)$ or $w\in\L[{\eta}] (s)$. By induction $\sigma \vdash w \in r$
    or $\sigma \vdash w \in s$. In both cases, we can conclude $\sigma
    \vdash w \in r+s$ for all $\sigma$.

    \textbf{Case }$r \cdot s $. $w\in\L[{\eta}] (r\cdot s)$ implies $w =
    uv$ and $u \in \L[{\eta}] (r)$ and $v \in \L[{\eta}] (s)$. By induction
    $\sigma \vdash u \in r$ and $\sigma \vdash v \in s$ and hence
    $\sigma \vdash uv \in r \cdot s$.

    \textbf{Case }$\mu x.r$. $w \in \L[{ \eta}] (\mu x.r) = \LFP\lambda
    V. \L[{\eta[V/x]}] (r)$. We proceed by fixed point induction. Assuming
    that for all $v \in V = \L[{\eta[V/x]}] (x)$, $\sigma[\mu x.r/x] \vdash v \in x$, we
    obtain by induction that $\sigma[\mu x.r/x] \vdash w \in r$ and
    hence $\sigma \vdash w \in \mu x .r $.

    \textbf{Case }$x$. $w \in \L[{\eta}] (x) = \eta (x) \subseteq \L[{ (\eta \setminus x)}]
    (\sigma (x))$. By the assumption of the
    fixed point induction, we find that $\sigma[\mu x.r/x] \vdash w \in
    x$.
    \qed
\end{proof}

\clearpage{}
\begin{proof}[of Lemma~\ref{lemma:membership-apply-substitution}]
  By induction on the derivation of  $\sigma \vdash w \in r$.

  \textbf{Rule }$\inferrule{}{\sigma \vdash \varepsilon \in
    \Rempty}$. Immediate.

  \textbf{Rule }$\inferrule{}{\sigma \vdash a \in a}$. Immediate.

  \textbf{Rule }$\inferrule{\sigma \vdash w \in r}{\sigma \vdash w \in
    r+s}$. Immediate by application of the inductive hypothesis and
  $\ApplySubst\sigma{(r+s)} = (\ApplySubst\sigma r) + (\ApplySubst\sigma
  s)$.

  \textbf{Rule }$\inferrule{\sigma \vdash w \in s}{\sigma \vdash w \in
    r+s}$. Analogous to previous.

  \textbf{Rule }$\inferrule{\sigma \vdash v \in r \\ \sigma \vdash w \in s}{\sigma
    \vdash vw \in r\cdot s}$. Immediate by application of the inductive hypothesis and
  $\ApplySubst\sigma{(r\cdot s)} = (\ApplySubst\sigma r) \cdot (\ApplySubst\sigma
  s)$.

  \textbf{Rule }$\inferrule{}{\sigma \vdash \varepsilon \in
    r^*}$. Immediate.

  \textbf{Rule }$\inferrule{\sigma \vdash v \in r \\ \sigma\vdash w\in r^*}{\sigma
    \vdash vw \in r^*}$. Immediate by application of the inductive hypothesis and
  $\ApplySubst\sigma{(r^*)} = (\ApplySubst\sigma r)^*$.

  \textbf{Rule }$\inferrule{\sigma[\mu x. r/x] \vdash w \in r}{\sigma \vdash w \in
    \mu x.r}$.

  Because $\mu x.r \in \Reg (\Sigma, X)$ is order-respecting, we can assume
  that $x\notin X$ and  $y \prec x$, for all $y\in \FV (\mu x.r) \subseteq X = \DOM
  (\sigma)$.
  Hence, $\sigma[\mu x.r/x] : (X\cup\{x\}) \to \Reg (\Sigma,
  X\cup\{x\})$ is order-closed and $r \in \Reg (\Sigma,
  X\cup\{x\})$ is order-respecting. Thus, induction is applicable and  yields $\emptyset
  \vdash w \in \ApplySubst{\sigma[\mu x.r/x]}{r}$. By fixed point
  folding we obtain  $\emptyset
  \vdash w \in \ApplySubst{\sigma}{\mu x.r}$.

  \textbf{Rule }$\inferrule{\sigma \vdash w\in \mu x.r }{
      \sigma[\mu x. r/x]
      \vdash w \in x}$.

  By induction, $\emptyset \vdash w \in \ApplySubst\sigma {(\mu
    x.r)}$. Clearly, $\ApplySubst\sigma {(\mu x.r)} =
  \ApplySubst{\sigma[\mu x. r/x]}{x}$, so that $\emptyset \vdash w \in
  \ApplySubst{\sigma[\mu x. r/x]} x$.
  \qed
\end{proof}

\begin{proof}[of Theorem~\ref{lemma:derivation-correct}]
  Prove the direction from left to right by induction on $\sigma
  \vdash aw \in r$. 

  \textbf{Rule }$\inferrule{}{\sigma \vdash \varepsilon \in \Rempty}$:
  contradictory as $\varepsilon$ cannot be written in the form $aw$.

  \textbf{Rule }$\inferrule{}{\sigma \vdash a \in a}$: in this case
  $w=\varepsilon$ and $\Rempty\in \cderiv[\sigma,\nu]{a}{a}$. Hence, $\emptyset
  \vdash \varepsilon \in  \cderiv[\sigma,\nu]{a}{a}$ is 
  derivable.

  \textbf{Rule }$\inferrule{\sigma \vdash w \in r}{\sigma \vdash w \in
    r+s}$.
  Thus $\sigma \vdash aw \in r+s$ because $\sigma \vdash aw \in r$. By
  induction, $\emptyset \vdash w \in \cderiv[\sigma,\nu]{\SINGLETON r}{a}$ and
  hence $\emptyset \vdash w \in \cderiv[\sigma, \nu]{r+s}{a}$ is also
  provable by Lemma~\ref{lemma:membership-subset} because
  $\pderiv[\sigma, \nu]{r}{\alpha} \subseteq \pderiv[\sigma,
  \nu]{r+s}{\alpha}$. 

  \textbf{Rule }$\inferrule{\sigma \vdash w \in s}{\sigma \vdash w \in
    r+s}$. Analogous.

  \textbf{Rule }$\inferrule{\sigma \vdash v \in r \\ \sigma \vdash w \in s}{\sigma
    \vdash vw \in r\cdot s}$. There are two cases.

  \textbf{Subcase }$\sigma \vdash aw \in r\cdot s$ because $w=w_1w_2$
  and $\sigma \vdash aw_1 \in r$ and $\sigma \vdash w_2 \in s$.
  By induction, $\emptyset \vdash w_1 \in \cderiv[\sigma, \nu]{r}{a}$
  which means (inversion) there is some $\RS\in \cderiv[\sigma, \nu]{r}{a}$ such
  that $\emptyset \vdash w_1 \in \RS$. By Lemma~\ref{lemma:membership-apply-substitution},
  $\sigma \vdash w_2 \in s$ iff  $\emptyset \vdash w_2 \in
  \ApplySubst\sigma  s$.
  By the concatenation rule we obtain  $\emptyset \vdash w_1w_2 \in
  \RS \cdot ({\ApplySubst\sigma s})$.
  Now
  $\RS \cdot (\ApplySubst\sigma s) \in \cderiv[\sigma, \nu]{r}{a} \cdot (\ApplySubst\sigma
  s) \subseteq \cderiv[\sigma, \nu]{r\cdot s}{a}$ so that
  $\emptyset \vdash w_1w_2 \in \cderiv[\sigma, \nu]{r\cdot s}{a}$.

  \textbf{Subcase }$\sigma \vdash aw \in r\cdot s$ because $\sigma
  \vdash \varepsilon \in r$ and $\sigma \vdash aw \in s$. Immediate by
  observing that $\Null (r)\nu = \True$,
  induction on the proof of  $\sigma \vdash aw \in s$, and applying
  Lemma~\ref{lemma:membership-subset}.

  \textbf{Rule }$\inferrule{}{\sigma \vdash \varepsilon \in r^*}$:
  contradictory.

  \textbf{Rule }$\inferrule{\sigma \vdash v \in r \\ \sigma\vdash w\in r^*}{\sigma
    \vdash vw \in r^*}$.  There are two cases.

  \textbf{Subcase }$\sigma \vdash aw \in r^*$ because $w=w_1w_2$ and
  $\sigma \vdash aw_1 \in r$ and $\sigma \vdash w_2 \in r^*$. By
  analogous argumentation as in the first subcase for concatenation,
  we find that $\emptyset \vdash w_1w_2 \in \cderiv[\sigma,
  \nu]{r^*}{a}$.

  \textbf{Subcase }$\sigma \vdash  aw \in r^*$ because $\sigma \vdash
  \varepsilon \in r$ and $\sigma \vdash aw \in r^*$. Immediate by the
  inductive hypothesis for  $\sigma \vdash aw \in r^*$.

  \textbf{Rule }$\inferrule{\sigma[\mu x. r/x] \vdash w \in r}{\sigma \vdash w \in
    \mu x.r}$.
  Let $\hat\sigma = \sigma[\mu x. r/x]$
  and $\hat\nu = \nu[\Null (r)\nu[ \False/x]/x]$.

  Given the assumption $\hat\sigma \vdash aw \in r$, we obtain
  by induction that there exists some $\RS\in \cderiv[\hat\sigma,
  \hat\nu]{r}{a}$ so that $\emptyset \vdash w \in \RS$. Clearly,
  $\emptyset \vdash w \in \RS \PUSH \Rempty$  and thus  $\emptyset \vdash w
  \in \cderiv[\sigma,  \nu]{\mu x.r}{a}$.

  \textbf{Rule }$\inferrule{\sigma \vdash w\in \mu x.r }{ \sigma[\mu x. r/x]
    \vdash w \in x}$.   Let $\hat\sigma = \sigma[\mu x. r/x]$, $\hat\nu =
  \nu[\Null (r)\nu[ \False/x]/x]$.

  Thus, $\sigma[\mu x. r/x] \vdash aw \in x$ because $\sigma \vdash aw
  \in \mu x.r$. 
  By induction, it must be that $\emptyset \vdash w \in
  \cderiv[\sigma,\nu]{\mu x.r}{a}$.
  We find that also $\emptyset \vdash w \in
  \cderiv[\hat\sigma,\hat\nu]{x}{a}$ because
  \begin{align*}
    \cderiv[\hat\sigma,\hat\nu]{ x}{a}
    &=
    \pderiv[\hat\sigma,\hat\nu]{ x}{a} \cup \bigcup \{
    \cderiv[\hat\sigma,\hat\nu]{\SS}{a} \mid \SS \in
    \pderiv[\hat\sigma,\hat\nu]{x}{\varepsilon}\}
    \\&=
    \emptyset \cup \bigcup \{
    \cderiv[\hat\sigma,\hat\nu]{\SS}{a} \mid \SS \in
    \SINGLETON{\mu x.r}\}
    \\&=
    \cderiv[\sigma,\nu]{\mu x.r}{a}
    \text.
  \end{align*}
  \qed
\end{proof}

\begin{proof}[of Lemma~\ref{lemma:classification-of-stack-elements-finally}]
  By induction on $r$. Observe that all substitutions that arise in
  the inductive computation of $\pderiv{r}a$ are subsumed by $\sigma_t$.

  \textbf{Case }$\Rnull$, $\Rempty$: void.

  \textbf{Case }$a$: $\pderiv[\sigma,\nu]aa = \{ \SINGLETON\Rempty \}$
  has form \textbf{top} for empty $\vec s$.

  \textbf{Case }$r+s$: immediate by induction.

  \textbf{Case }$r\cdot s$: If $\SS \in \pderiv[\sigma,\nu]{r\cdot
    s}a$ because $\SS = \RS \cdot (\ApplySubst{\sigma_t} s) $, for some
  $\RS\in \pderiv[\sigma,\nu]ra$, then $\RS$ has the required shape by
  induction.
  The last expression in $\RS$  is either $\Rempty$ 
  or it ends with a vector of $\sigma_t$ substitution instances of
  elements before $r$. As $r$ is before $s$, the final concatenation
  preserves $t$-sortedness and thus $\SS$ has the required shape.
  If $\SS \in
  \pderiv[\sigma,\nu]sa$, then all forms are 
  preserved by induction.

  \textbf{Case }$r^*$:  If $\SS \in \pderiv[\sigma,\nu]{r^*}a$, then
  $\SS = \RS \cdot (\ApplySubst{\sigma_t} r^*)$ for some $\RS\in
  \pderiv[\sigma,\nu]ra$. By induction, $\RS$ has the required
  shape. The last expression in $\RS$ is either $\Rempty$ 
  or it ends with a vector of $\sigma_t$ substitution instances of
  elements before $r$. Thus, the final concatenation preserves
  $t$-sortedness and hence $\SS$ has the required shape.

  \textbf{Case }$\mu x.r$: 
  If $\SS \in \pderiv[\sigma,\nu]{\mu x.r}a$,
  then $\SS = \RS \PUSH \SINGLETON\Rempty$, By induction, $\SS$ has
  form $\mathbf{top}^+$ and $\Rempty$ has form $\mathbf{top}$ with an
  empty vector, so that $\RS$ has form $\mathbf{top}^+$. 

  \textbf{Case }$x$: void.
  \qed
\end{proof}
\begin{proof}[of Lemma~\ref{lemma:classification-of-stack-elements-epsilon}]
  By induction on $r$. Observe that all substitutions that arise in
  the inductive computation of $\pderiv{r}a$ are subsumed by $\sigma_t$.

  \textbf{Case }$\Rnull$, $\Rempty$, $a$: void.

  \textbf{Case }$r+s$: immediate by induction.

  \textbf{Case }$r\cdot s$: If $\SS \in \pderiv[\sigma,\nu]{r\cdot
    s}\varepsilon$ because $\SS = \RS \cdot (\ApplySubst{\sigma_t} s)$, for some
  $\RS\in \pderiv[\sigma,\nu]r\varepsilon$, then $\RS$ has the required shape by
  induction.
  The last expression in $\RS$  is either $\Rempty$ 
  or it ends with a vector of $\sigma_t$ substitution instances of
  elements before $r$. As $r$ is before $s$, the final concatenation
  preserves $t$-sortedness and thus $\SS$ has the required shape.
  If $\SS \in
  \pderiv[\sigma,\nu]s\varepsilon$, then all forms are 
  preserved by induction.

  \textbf{Case }$r^*$:  If $\SS \in \pderiv[\sigma,\nu]{r^*}\varepsilon$, then
  $\SS = \RS \cdot (\ApplySubst{\sigma_t} r^*)$ for some $\RS\in
  \pderiv[\sigma,\nu]r\varepsilon$. By induction, $\RS$ has the required
  shape. The last expression in $\RS$ is either $\Rempty$ 
  or it ends with a vector of $\sigma_t$ substitution instances of
  elements before $r$. Thus, the final concatenation preserves
  $t$-sortedness and hence $\SS$ has the required shape.

  \textbf{Case }$\mu x.r$: 
  If $\SS \in \pderiv[\sigma,\nu]{\mu x.r}\varepsilon$,
  then $\SS = \RS \PUSH \SINGLETON\Rempty$, By induction, $\SS$ has
  form $\mathbf{rec}.\mathbf{top}^*$ and $\Rempty$ has form $\mathbf{top}$ with an
  empty vector, so that $\RS$ has form $\mathbf{rec}.\mathbf{top}^*$. 

  \textbf{Case }$x$: $\pderiv[\sigma,\nu]{x}\varepsilon = \{
  \SINGLETON{\ApplySubst\sigma x} \}$, which has shape $\mathbf{rec}$
  and thus $\mathbf{rec}.\mathbf{top}^*$.
  \qed
\end{proof}
\begin{proof}[of Lemma~\ref{lemma:closure-form-finally}]
  \textbf{Form~top}:
  As $t_0$ is closed, it must be that for a $t$-sorted vector $s_1\cdots s_k$
  \begin{align*}
    \pderiv[\emptyset,\emptyset]{t_0}a
    &= \pderiv[\emptyset,\emptyset]{\ApplySubst{\sigma_t}{(\Rempty \cdot (s_1\cdots s_k))}}a \\
    &= \pderiv[\emptyset,\emptyset]{\ApplySubst{\sigma_t}{(s_1\cdots
        s_k)}}a \\
    &= \bigcup \{
    \pderiv[\emptyset,\emptyset]{\ApplySubst{\sigma_t} {(s_j \cdots s_k)}}a
    \mid \Null (\ApplySubst{\sigma_t}{(s_1 \cdots s_{j-1})})\emptyset \} \\
    &= \bigcup \{
    \pderiv[\emptyset,\emptyset]{\ApplySubst{\sigma_t}{s_j}}a \cdot \ApplySubst{\sigma_t} {(s_{j+1} \cdots s_k)}
    \mid \Null (\ApplySubst{\sigma_t}{(s_1 \cdots s_{j-1})})\emptyset \}
  \end{align*}

  We prove by induction on $s_j$ that its partial derivatives have the form $[r_1, \dots, r_n]$, for
  $n\ge1$, with all $r_i$ of form \textbf{top}.
  All elements of vector $r_n$ are $\sigma_t$ substitution instances of $t$-subexpressions strictly
  before $s_{j+1}$. We write $\vec s$ for $\ApplySubst{\sigma_t} {(s_{j+1} \cdots s_k)}$.

  \textbf{Case }$\Rnull$: does not appear in a partial derivative.

  \textbf{Case }$\Rempty$: the partial derivative has no elements.

  \textbf{Case }$a$: the only possible element of the output is
  $[\Rempty]$ of form \textbf{top}. Obviously, it can be extended to a
  form \textbf{top} expression by appending $ \vec s$.

  \textbf{Case }$r+s$: immediate by induction.

  \textbf{Case }$r\cdot s$: If $\RS = [r_1,\dots,r_n] \in
  \pderiv[\emptyset,\emptyset]{\ApplySubst{\sigma_t}r}a$, then its
  form is according to
  Lemma~\ref{lemma:classification-of-stack-elements-finally} with all elements
  of the vector $r_n$ strictly before $s$, by induction. In that case, $\RS
  \cdot (\ApplySubst{\sigma_t}s) \in
  \pderiv[\emptyset,\emptyset]{\ApplySubst{\sigma_t}{(r\cdot s)}}a$
  where the bottom vector $r_n \cdot  (\ApplySubst{\sigma_t}s)$ has
  form \textbf{top} and is
  composed of elements strictly before $r\cdot s$ and by transitivity
  before $s_{j+1}$.
  Otherwise, if $\Null (\ApplySubst{\sigma_t}r)\emptyset$, then
  $\pderiv[\emptyset,\emptyset]{\ApplySubst{\sigma_t}s}a \subseteq
  \pderiv[\emptyset,\emptyset]{\ApplySubst{\sigma_t}{(r\cdot s)}}a$
  and the claim holds by induction.

  \textbf{Case }$r^*$: If  $\RS = [r_1,\dots,r_n] \in
  \pderiv[\emptyset,\emptyset]{\ApplySubst{\sigma_t}r}a$, then its
  form is according to
  Lemma~\ref{lemma:classification-of-stack-elements-finally} with all elements
  of the vector $r_n$ strictly before $r^*$, by induction.
  In that case, $\RS
  \cdot (\ApplySubst{\sigma_t}{(r^*)}) \in
  \pderiv[\emptyset,\emptyset]{\ApplySubst{\sigma_t}{(r^*)}}a$
  where the bottom vector $r_n \cdot  (\ApplySubst{\sigma_t}r^*)$ has
  form \textbf{top} and is
  composed of elements before $r^*$ and by transitivity
  strictly before $s_{j+1}$.

  \textbf{Case }$\mu x.r$:
  Each stack in
  $\pderiv[\emptyset,\emptyset]{\ApplySubst{\sigma_t}{\mu x.r}}a$ has
  the form  $\RS \PUSH \SINGLETON\Rempty$ and $\RS \in
  \pderiv[\emptyset,\emptyset]{\ApplySubst{(\sigma_t \setminus x)}{r}}a$ .
  The claim holds by induction for the vectors in $\RS$. The last element of the stack has form
  \textbf{top} and it is strictly before $s_{j+1}$ by construction.

  \textbf{Case }$x$: the partial derivative is empty.

  \textbf{Form~rec}.
  As $t_0$ is closed, it must be that for a $t$-sorted vector $s_1\cdots s_k$
  \begin{align*}
    \pderiv[\emptyset,\emptyset]{t_0}a
    &= \pderiv[\emptyset,\emptyset]{\ApplySubst{\sigma_t}{(({\mu x.s_0}) \cdot (s_1\cdots s_k))}}a \\
    &=
    \pderiv[\emptyset,\emptyset]{\ApplySubst{\sigma_t}{({\mu x.s_0})}}a  \cdot \ApplySubst{\sigma_t}(s_1\cdots s_k)
    \cup \pderiv[\emptyset,\emptyset]{\ApplySubst{\sigma_t}{(s_1\cdots s_k)}}a 
  \end{align*}
  because $\ApplySubst{\sigma_t}{\Angle{\mu x.s_0}}$ is nullable.
  A stack in the right argument of the union has form $\mathbf{top}^+$ as in the case
  for form \textbf{top}. It remains to consider 
  \begin{align*}
    \pderiv[\emptyset,\emptyset]{\ApplySubst{\sigma_t}{({\mu x.s_0})}}a
    &= \pderiv[\emptyset,\emptyset]{s_0}{a}
                                      \PUSH
                                      \SINGLETON\Rempty
  \end{align*}
  By Lemma~\ref{lemma:classification-of-stack-elements-finally} each stack 
  $\RS \in \pderiv[{[\mu x.s_0/x]}, {[\Null
    (s_0)\nu[\False/x]/x]}]{\ApplySubst{(\sigma_t \setminus x)}s_0}{a}$ has form
  $\mathbf{top}^+$.

  \textbf{Derivation by }$\varepsilon$. The case analysis for
  $\pderiv[\emptyset,\emptyset]{t_0}\varepsilon$ is analogous to the
  analysis above. The only substantially  different case is the following.

  \textbf{Subcase }$x$: In this case,
  $\pderiv[\sigma,\nu]{x}\varepsilon = \{ \SINGLETON{\ApplySubst\sigma
    x} \}$ where $\sigma (x) = \ApplySubst{\sigma_t}{\mu x.r}$. Hence,
  the derivative has the form \textbf{rec} and thus $\mathbf{rec}.\mathbf{top}^*$.
  \qed
\end{proof}

\end{document}